\newtheorem{theorem}{Theorem}
\newtheorem{lemma}[theorem]{Lemma}
\newtheorem{corollary}[theorem]{Corollary}
\begin{document}

\title{On Fractional Linear Network Coding Solution of Multiple-Unicast Networks}
\author{\IEEEauthorblockN{Niladri Das and Brijesh Kumar Rai}
\IEEEauthorblockA{Department of Electronics and Electrical Engineering\\ Indian Institute of Technology Guwahati, Guwahati, Assam, India\\
Email: \{d.niladri, bkrai\}@iitg.ernet.in}}
\maketitle

\begin{abstract}
It is known that there exists a multiple-unicast network which has a rate $1$ linear network coding solution if and only if the characteristic of the finite field belongs to a given finite or co-finite set of primes. In this paper, we show that for any non-zero positive rational number $\frac{k}{n}$, there exists a multiple-unicast network which has a rate $\frac{k}{n}$ fractional linear network coding solution if and only if the characteristic of the finite field belongs to a given finite or co-finite set of primes.
\end{abstract}

\section{Introduction}
The concept of network coding came into light in the year 2000 by a seminal work of Ahlswede \textit{et al.}. In \cite{ahlswede} the authors showed that the min-cut bound in a multicast network can be achieved if the intermediate nodes are allowed to do operations on incoming messages before forwarding. This kind of operation of intermediate nodes has been termed as network coding. Linear network coding refers to the scheme when all such operations are linear. In subsequent works it has been shown that linear network coding is sufficient to achieve the capacity of multicast networks \cite{li,medrad,jaggi}. Moreover, a capacity achieving network code can be designed efficiently \cite{jaggi}. As far as a multicast network is concerned, a scalar linear network code over a sufficiently large finite field suffices to achieve the capacity. However, the requirement of field size can be reduced if vector linear network coding is used \cite{jaggi2}. In \cite{riis}, it was shown that binary field is sufficient if the vector length is large enough. These results related to multicast networks were known by the year 2005. After a decade later, there has been a few interesting works related to intricacies involved in linear network coding for multicast network. In one such result it has been shown that a multicast network being linearly solvable over a sufficiently large finite field does not necessarily mean that over all larger fields it is also solvable \cite{sun1}. It was also shown that existence of a vector linear solution for a certain vector dimension does not necessarily mean that there exists a vector linear solution for all larger vector dimensions \cite{sun2}. It is important to note that for a multicast network, the characteristic of the finite field does not play an important role in the sense that there does not exist a multicast network which has a scalar/vector linear solution only over a finite field of certain characteristic.

The non-multicast networks have very different properties. For non-multicast networks it has been shown that there exists a network where better throughput (capacity) can be achieved if the nodes are allowed to do non-linear operations \cite{insuff}. There are non-multicast networks where scalar/vector linear solution not only depends on the size of the field but also on the characteristic of the field \cite{poly, rai1, rai2}. In particular, it has been shown in \cite{poly} that for any system of polynomial equations over integers, there exists a network which has a scalar linear network solution over a finite field if and only if the system of polynomial equation has a root in the same finite field. Therefore, there exist networks which are solvable only over fields of certain characteristics. Interesting examples are the Fano and non-Fano network presented in \cite{insuff,matroid}. The Fano network has a vector linear solution for any vector dimension if and only if the characteristic of the finite field is even. Over a finite field of odd characteristic it has linear coding capacity equal to $\frac{4}{5}$. The non-Fano network has a vector linear solution for any vector dimension if and only if the characteristic of the finite field is odd. If the characteristic is even then its linear coding capacity is equal to $\frac{10}{11}$.

In the references \cite{rai1} and \cite{rai2} the authors considered networks where all terminals demand the sum of the symbols generated at the sources. Such networks have been given the name sum-networks. In \cite{rai1} it was shown that for any finite set of primes there exists a sum-network which has a vector linear solution for any vector dimension if and only if the characteristic of the finite field belongs to the given set. In \cite{rai2} it was shown that for any sum-network, there exists a multiple-unicast network which has a rate $1$ solution if and only if the sum-network has a rate $1$ solution. They also show that for any co-finite set of primes there exists a sum-network which has a vector linear solution for any vector dimension if and only if the characteristic of the finite field belong to the given set \cite{rai2}. Thus, these three results when combined, shows that for any given finite/co-finite set of primes there exists a multiple-unicast network which has a vector linear solution if and only if the characteristic of the finite field belong to the given set.

However, in the works of \cite{poly, rai1, rai2}, the dependency on the characteristic of the field is shown only for either scalar linear network coding \cite{poly} or vector linear network coding of any vector length \cite{rai1,rai2}. In this paper, we generalise the previous results to show that for any non-zero positive rational number $\frac{k}{n}$ and for any given finite/co-finite set of prime numbers, there exists a non-multicast network which has a rate $\frac{k}{n}$ fractional linear network code solution if and only of the characteristic of the finite belongs to the given finite/co-finite set of primes.

The organization of the paper is as follows. In Section~\ref{sec2} we reproduce the standard definitions of fractional linear network coding, vector linear network and scalar linear network coding. In Section~\ref{sec3} we show that for any non-zero positive rational number $\frac{k}{n}$, and for any finite/co-finite set of primes, there exists a network which has a rate $\frac{k}{n}$ fractional linear network coding solution if and only if the characteristic of the finite field belongs to the given set. In Section~\ref{sec4} we extend the results of Section~\ref{sec2} to multiple-unicast networks. The paper is concluded in Section~\ref{sec5}.

\section{Preliminaries}\label{sec2}
A network is represented by a graph $G(V,E)$. The set $V$ is partitioned into three disjoint sets namely, the set of sources $S$, the set of terminals $T$, and the rest of the nodes are called the intermediate nodes and their collection is denoted by $V^\prime$. Without loss of generality the sources are assumed to have no incoming edge and the terminals are assumed to have no outgoing edge. Each source generates an i.i.d random process uniformly distributed over an alphabet $\mathcal{A}$. The source process at any source is independent of all source processes generated at other sources. Each terminal demands to compute the information generated at a subset of the sources. An edge $e$ originating from node $u$ and ending at node $v$ is denoted by $(u,v)$; where $u$ is denoted by $tail(e)$, and $v$ is denoted by $head(e)$. For an node $v\in V$, the set of edges $e$ for which $head(e) = v$ is denoted by $In(v)$. The information carried by an edge $e$ is denoted by $Y_e$. Without loss of generality it is assumed that all the edges in the network are unit capacity edges.

In a $(k,n)$ fractional linear network code the alphabet $\mathcal{A}$ is taken as a finite field $\mathbb{F}_q$. Each source $s_i\in S$ generates a symbol $X_i$ from the finite field $\mathbb{F}_q^k$. For any edge $e$, if $tail(e) = s_i$, then $Y_e = A_{\{s_i,e\}}X_i$ where $Y_e \in \mathbb{F}_q^n$, $A_{\{s_i,e\}} \in \mathbb{F}_q^{n\times k}$ and $X_i\in \mathbb{F}_q^k$. If $tail(e)$ is any intermediate node $v\in V^{\prime}$, then $Y_e = \sum_{\forall e^\prime \in In(v)} A_{\{e^\prime,e\}}Y_{e^\prime}$ where $Y_e,Y_{e^\prime} \in \mathbb{F}_q^n$, and $A_{\{e^\prime,e\}} \in \mathbb{F}_q^{n\times n}$. For any terminal $t\in T$, if $t$ computes symbol $X_t$, then $X_t = \sum_{\forall e^\prime \in In(t)} A_{\{e^\prime,t\}}Y_{e^\prime}$ where $X_t\in \mathbb{F}_q^k, A_{\{e^\prime,t\}} \in \mathbb{F}_q^{k\times n}$ and $Y_{e^\prime}\in \mathbb{F}_q^n$. The matrices $A_{\{s_i,e\}}, A_{\{e^\prime,e\}}$ and  $A_{\{e^\prime,t\}}$ shown above are called as the local coding matrices.

If using a $(k,n)$ fractional linear network code all terminals can compute $k$ respective symbols in $n$ uses of the network, then the network is said to have a $(k,n)$ fractional linear network coding solution. The ratio $\frac{k}{n}$ is called the rate. A network is said to have a rate $\frac{k}{n}$ fractional linear network coding solution if it has a $(dk,dn)$ fractional linear network coding solution for any non-zero positive integer $d$. A $(k,k)$ fractional linear network code is called as a $k$ dimensional vector linear network code and $k$ is called the vector dimension or as the message dimension. If a network has a $(k,k)$ fractional linear network coding solution then it is said that the network has a vector linear solution for $k$ message dimension. If a network has a $(1,1)$ vector linear network coding solution then the network is said to be scalar linearly solvable.

\section{A network having a rate $\frac{k}{n}$ fractional linear network coding solution iff the characteristic belongs to a given finite/co-finite set of primes}\label{sec3}

\subsection{Network having $\frac{k}{n}$ solution iff characteristic belongs to a given finite set of primes.}\label{subsec1}
\begin{figure*}
\centering
\includegraphics[width=.96\textwidth]{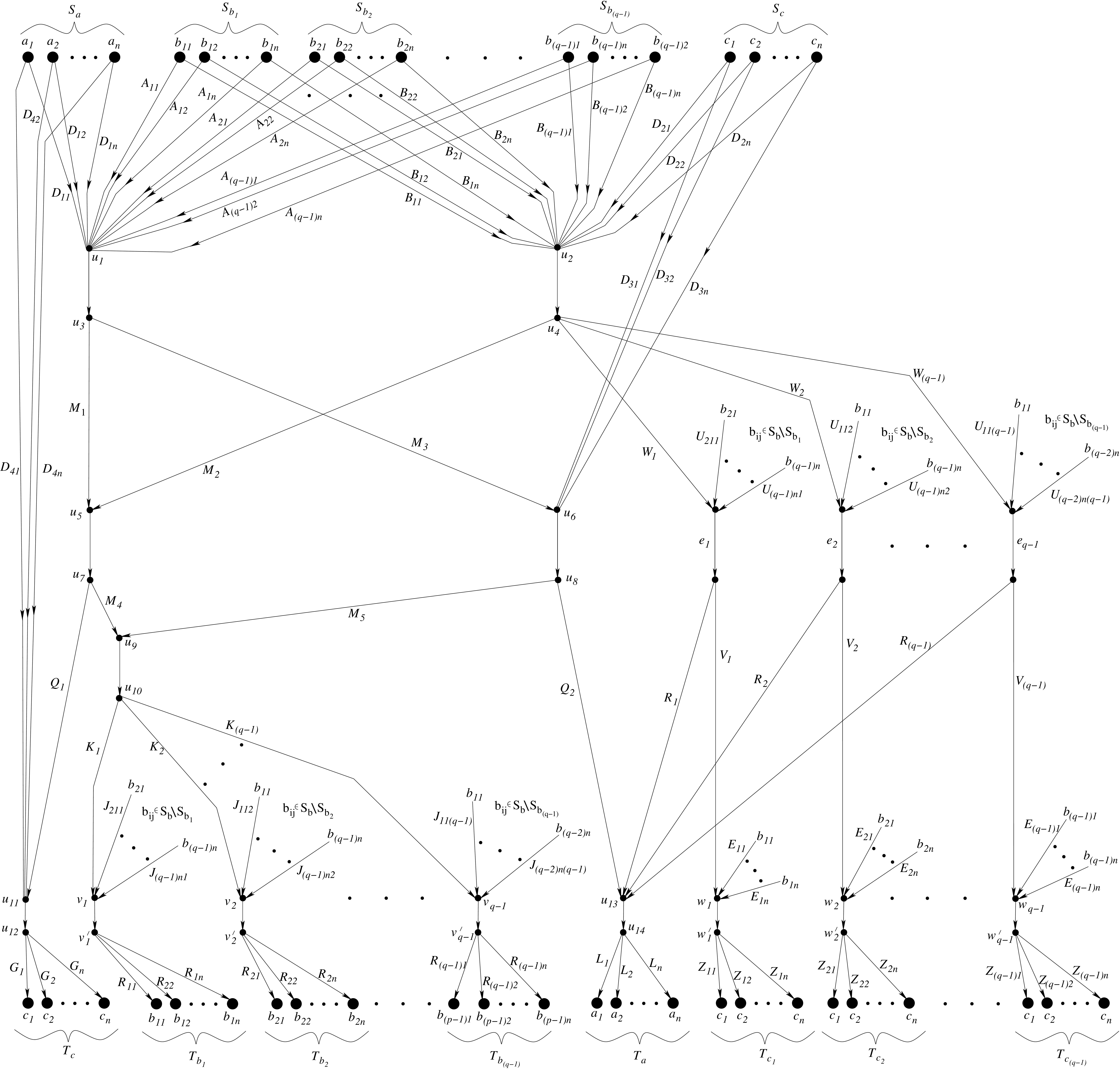}
\caption{Network $\mathcal{N}_1$ which has a rate $\frac{1}{n}$ fractional linear network coding solution if and only if the characteristic of the finite field divides $q$}
\label{genfano1/n}
\end{figure*}
First we show that for any positive non-zero rational number $\frac{k}{n}$, and for any given finite set of primes, there exists a network which has a rate $\frac{k}{n}$ fractional linear network coding solution if and only if the characteristic of the finite field belongs to the given set. Towards this end, we consider the network $\mathcal{N}_1$ presented in Fig.~\ref{genfano1/n}. The network shown here in Fig.~\ref{genfano1/n} has both the generalized Fano network shown in \cite{gf} and the Fano network shown in \cite{gf} as a sub-network. As it can be seen, the network has $q+1$ sets of sources namely, $S_a$, $S_{b_i}$ for $1\leq i\leq (q-1)$ and $S_c$. In the figure, the individual source nodes are indicated by the source message it generates. A source $s_i\in S_a $ generates the message $a_i$. For $1\leq i\leq (q-1)$ and $1\leq j\leq n$ a source $s_j \in S_{b_i}$ generates the message $b_{ij}$. And a source $s_i\in S_c$ generates the message $c_i$. There are $2q$ sets of terminals namely, $T_c,T_a$, $T_{b_i}$ and $T_{c_i}$ for $1\leq i\leq (q-1)$; and each of these sets contains $n$ terminals. Each individual terminal is indicated by the source message it demands.

Below we list the set of edges which has a source node as its tail.
\begin{enumerate}
\item $(s,u_1)$ for $\forall s \in \{S_a,S_{b_1},S_{b_2},\ldots,S_{b_{q-1}} \} $.
\item $(s,u_2)$ for $\forall s \in \{S_{b_1},S_{b_2},\ldots,S_{b_{q-1}},S_c\} $.
\item $(a_i,u_{11})$ for $1\leq i\leq n$.
\item $(c_i,u_6)$ for $1\leq i\leq n$.
\item $(b_{ij},tail(e_k))$ for $1\leq i,k \leq (q-1)$, $i\neq k$, and $1\leq j\leq n$.
\item $(b_{ij},v_k)$ for $1\leq i,k \leq (q-1)$, $i\neq k$, and $1\leq j\leq n$.
\item $(b_{ij},w_i)$ for $1\leq i \leq (q-1)$, and $1\leq j\leq n$.
\end{enumerate}
We now list the edges which originates at an intermediate node and ends at a intermediate node.
\begin{enumerate}
\setcounter{enumi}{7}
\item $(u_i,u_{i+2})$ for $1\leq i\leq 7, i\neq 4$.%  $i=1,2,3,5,6,7$
\item $(u_i,u_{i+1})$ for $i=4,8,9,11,13$.
\item $(u_3,u_6)$, $(u_7,u_{11})$, and $(u_8,u_{13})$
\item $e_i$ for $1\leq i\leq (q-1)$
\item $(u_4,tail(e_i))$ for $1\leq i\leq (q-1)$
%\item  for $1\leq i\leq (q-1)$
\item $(head(e_i),u_{13})$ and $(head(e_i),w_i)$ for $1\leq i\leq (q-1)$
%\item  for $1\leq i\leq (q-1)$
\item $(u_{10},v_i)$ and $(v_i,v^\prime_i)$ for $1\leq i\leq (q-1)$
\item $(w_i,w^\prime_i)$ for $1\leq i\leq (q-1)$
\end{enumerate}
For any terminal $t_i\in T_c$ there exists an edge $(u_{12},t_i)$ and $t_i$ demands the message $c_i$. For any terminal $t_j\in T_{b_i}$ for $1\leq i\leq (q-1), 1\leq j\leq n$, there exits an edge $(v^\prime_i,t_j)$ where the terminal $t_j$ demands the message $b_{ij}$. For any terminal $t_i\in T_a$ there exits an edge $(u_{14},t_i)$ and $t_i$ demands the message $a_i$. For $1\leq i\leq (q-1)$, a terminal $t_j\in T_{c_i}$ for $1\leq j\leq n$ is connected to the node $w^\prime_i$ by the edge $(w^\prime_i,t_j)$ and $t_j$ demands the message $c_j$. The local coding matrices are shown alongside the edges.
\begin{lemma}\label{lem3}
The network in Fig.~\ref{genfano1/n} has a rate $\frac{1}{n}$ fractional linear network coding solution if and only if the characteristic of the finite field divides $q$.
\end{lemma}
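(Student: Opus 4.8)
The plan is to prove both directions by tracking the linear constraints that the bottleneck structure of $\mathcal{N}_1$ imposes on the local coding matrices. Since a rate $\frac{1}{n}$ solution means a $(d,dn)$ solution for every positive integer $d$, I first fix $d$ and collect the $n$ scalar messages of each source group into a single vector: write $A=(a_1,\dots,a_n)$, $C=(c_1,\dots,c_n)$, and $B_i=(b_{i1},\dots,b_{in})$ for $1\le i\le q-1$, each living in $\mathbb{F}^{dn}$ once the $d$ symbols per source are stacked. Because every edge carries exactly $\mathbb{F}^{dn}$ and each source group feeds a single head node, each bottleneck edge transmits a fixed $\mathbb{F}$-linear combination of $A$, the $B_i$, and $C$ determined by the local matrices along the enumerated edges (items 1--15). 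The demands of the terminal groups $T_a$, $T_c$, $T_{b_i}$, and $T_{c_i}$ then become the requirement that these combinations can be inverted to recover the respective message vectors. The analysis reduces to the embedded generalized Fano core, whose coding relation is where the dependence on $q$ enters.

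For the \emph{if} direction, assume the characteristic $p$ divides $q$, so that $q\cdot I=0$ over the field. I would exhibit an explicit $(d,dn)$ code by choosing every local coding matrix to be a block built from the $dn\times dn$ identity and its negative, arranged so that the edge $e_i$ carries $B_i$ together with a common combination of $A$ and $C$, the edges out of $u_{10}$ and $w_i$ redistribute the $B_i$ and $C$ components, and the decoding node for $T_a$ forms a sum in which every $B_i$ appears once and cancels. The one step that genuinely needs $p\mid q$ is the node recovering $A$ (respectively $C$): there the natural combination returns the target vector plus a term proportional to $q$ times an already-cancelled quantity, which vanishes precisely because $q\equiv 0$. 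I would then check directly that $T_c$, each $T_{b_i}$, and each $T_{c_i}$ recover their vectors from the corresponding incoming edges, completing the construction for arbitrary $d$.

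For the \emph{only if} direction, assume a $(d,dn)$ solution exists and let $p$ be the characteristic. Reading off the decoding equations at $T_{b_i}$ forces the matrix sending $B_i$ onto $e_i$ (and thence to $v_i'$) to be invertible, and the equations at $T_a$, $T_c$, $T_{c_i}$ force the corresponding combinations of $A$ and $C$ routed through $u_{11},\dots,u_{14}$ and through $w_i'$ to be invertible as well. Substituting the edge relations from items 8--15 into the decodability conditions and eliminating the intermediate variables, I expect the constraints to collapse to a single matrix identity of the form $(q\cdot I)\,M=0$ with $M$ invertible, forcing $q\cdot I=0$ and hence $p\mid q$. Concretely, summing the $q-1$ relations contributed by the $B_i$ branches against the relation contributed by the $A$--$C$ branch produces a coefficient $q$ on an invertible block, which is the generalized-Fano obstruction.

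The main obstacle is this necessity direction: correctly determining, from the edge list alone, which linear combination each bottleneck edge $e_i$, $v_i'$, $w_i'$ and each $u_j$ carries, and then performing the elimination that isolates the factor $q$ without losing track of the invertibility of the blocks involved. A secondary bookkeeping difficulty is that everything must be carried out at the vector level over $\mathbb{F}^{dn}$ for arbitrary $d$, so I must ensure that the cancellations and the extracted factor $q$ are independent of $d$ and depend only on $p\mid q$; the Fano sub-network is what pins down the remaining invertibility so that the extracted identity is genuinely $(q\cdot I)M=0$ rather than a weaker statement.
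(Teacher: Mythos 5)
Your proposal follows essentially the same route as the paper's proof: your sufficiency construction (identity/negative-identity routing matrices, with the one term proportional to $q$ cancelling at the branch that decodes the $a_i$'s) is exactly the paper's explicit $(1,n)$ code, and your necessity plan---stack the per-terminal decoding conditions into $dn\times dn$ invertibility statements, then eliminate the intermediate local coding matrices to isolate a coefficient $q$ on an invertible block---is precisely the elimination the paper carries out, terminating in $(q)\,Q_2D_{3}D_2^{-1}=0$. The stacking step you leave implicit is supplied by the paper's Lemma~\ref{lem1} and Corollary~\ref{coro1}, and the elimination you ``expect'' to collapse to $(q\cdot I)M=0$ does go through in exactly the form you describe, by summing the $q-1$ branch relations against the direct one.
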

\begin{IEEEproof}
Consider a $(d,dn)$ fractional linear network coding solution of the network $\mathcal{N}_1$ where $d$ is any non-zero positive integer. Then the sizes of the local coding matrices are given in the following. For $1\leq i\leq n$, the matrices $D_{4i}$ and $D_{1i}$ are of size $dn\times d$ and it left multiplies the information $a_i$ which is a $d$ length vector. Matrices $A_{ij}$, $B_{ij}$, $U_{ijk}$, $J_{ijk}$ and $E_{ij}$ for $1\leq i,k\leq (q-1),i\neq k$ and $1\leq j\leq n$ are of size $dn\times d$ and it left multiplies the $d$ length vector $b_{ij}$. For $1\leq i\leq n$, the matrices $D_{2i}$ and $D_{3i}$ are of size $dn\times d$ and it left multiplies the information $c_i$. The following matrices are of size $dn\times dn$: $M_i$ for $1\leq i\leq 5$, $Q_1,Q_2$, $K_i,R_i,V_i$ and $W_i$ for $1\leq i\leq (q-1)$. And the following are the matrices of size $d\times dn$: $G_j, R_{ij}, L_j$ and $Z_{ij}$ for $1\leq i\leq (q-1)$ and $1\leq j\leq n$. ALso let $I$ be a $d\times d$ identity matrix. Then,
\begin{IEEEeqnarray}{rl}
Y_{(u_1,u_3)} &= \sum_{i=1}^n D_{1i}a_i + \sum_{i=1}^{q-1} \sum_{j=1}^n A_{ij}b_{ij}\\
Y_{(u_2,u_4)} &= \sum_{i=1}^{q-1} \sum_{j=1}^n B_{ij}b_{ij} + \sum_{i=1}^n D_{2i}c_{i}\\
Y_{(u_5,u_7)} &= M_1Y_{(u_1,u_3)} + M_2Y_{(u_2,u_4)} = \sum_{i=1}^n M_1D_{1i}a_i \IEEEnonumber\\&+\> \sum_{i=1}^{q-1} \sum_{j=1}^n (M_1A_{ij} + M_2B_{ij}) b_{ij} + \sum_{i=1}^n M_2D_{2i}c_{i}\IEEEeqnarraynumspace\\
Y_{(u_6,u_8)} &= M_3Y_{(u_1,u_3)} + \sum_{i=1}^n D_{3i}c_{i} \IEEEnonumber\\&= \sum_{i=1}^n M_3D_{1i}a_i {+} \sum_{i=1}^{q-1} \sum_{j=1}^n M_3A_{ij}b_{ij} + \sum_{i=1}^n D_{3i}c_{i}\IEEEeqnarraynumspace\\
Y_{(u_9,u_{10})} &= M_4Y_{(u_5,u_7)} + M_5Y_{(u_6,u_8)} \IEEEnonumber\\&=\> \sum_{i=1}^n (M_4M_1D_{1i} + M_5M_3D_{1i})a_i \IEEEnonumber\\&+\> \sum_{i=1}^{q-1} \sum_{j=1}^n \{ M_4(M_1A_{ij} + M_2B_{ij}) + M_5M_3A_{ij}\} b_{ij} \IEEEnonumber\\& +\>\sum_{i=1}^n (M_4M_2D_{2i} + M_5D_{3i})c_{i}\IEEEeqnarraynumspace\\
Y_{(u_{11},u_{12})} &= \sum_{i=1}^n (D_{4i} + Q_1M_1D_{1i})a_i \IEEEnonumber \\&+\> \sum_{i=1}^{q-1} \sum_{j=1}^n (M_1A_{ij} + M_2B_{ij}) b_{ij} + \sum_{i=1}^n M_2D_{2i}c_{i}\IEEEeqnarraynumspace
\end{IEEEeqnarray}
\vspace*{-11pt}
\begin{IEEEeqnarray}{rl}
\IEEEeqnarraymulticol{2}{l}{\text{for } 1\leq i\leq (q-1):}\IEEEnonumber\\
 Y_{e_i} &= W_iY_{(u_2,u_4)} + \sum_{j=1,j\neq i}^{q-1} \,\sum_{k=1}^n U_{jki}b_{jk} = \sum_{k=1}^n W_iB_{ik}b_{ik} \IEEEnonumber\\&+\> \sum_{j=1,j\neq i}^{q-1}\, \sum_{k=1}^n (W_iB_{jk} + U_{jki})b_{jk} + \sum_{k=1}^n W_iD_{2k}c_{k}\IEEEeqnarraynumspace\\[12pt]
\IEEEeqnarraymulticol{2}{l}{\text{for } 1\leq i\leq (q-1):}\IEEEnonumber\\
Y_{(v_i,v_i^\prime)} &= K_iY_{(u_9,u_{10})} + \sum_{j=1,j\neq i}^{q-1} \,\sum_{k=1}^n J_{jki}b_{jk} \IEEEnonumber\\&=\> \sum_{k=1}^n K_i(M_4M_1D_{1k} + M_5M_3D_{1k})a_k \IEEEnonumber\\&+\>  \sum_{j=1}^n K_i\{ M_4(M_1A_{ij} + M_2B_{ij}) + M_5M_3A_{ij}\} b_{ij} \IEEEnonumber\\&+\> \sum_{k=1,k\neq i}^{q-1} \sum_{j=1}^n \{J_{kji} + K_i(M_4(M_1A_{kj} + M_2B_{kj})\IEEEnonumber \\&\hfill+ M_5M_3A_{kj})\} b_{kj} \IEEEnonumber\\&{+}\!\! \sum_{j=1}^n\! K_i(M_4M_2D_{2j} {+} M_5D_{3j})c_{j}\IEEEeqnarraynumspace\\[12pt]
\IEEEeqnarraymulticol{2}{l}{\text{for } 1\leq i\leq (q-1):}\IEEEnonumber\\
Y_{(w_i,w_i^\prime)} &= V_iY_{e_i} + \sum_{j=1}^n E_{ij}b_{ij} = \sum_{k=1}^n (V_iW_iB_{ik} + E_{ik}) b_{ik} \IEEEnonumber \\&+\> \sum_{j=1,j\neq i}^{q-1}\, \sum_{k=1}^n \{ V_i(W_iB_{jk} + U_{jki})\} b_{jk} \IEEEnonumber \\&+\> \sum_{k=1}^n V_iW_iD_{2k}c_{k}\IEEEeqnarraynumspace\\
\IEEEeqnarraymulticol{2}{l}{Y_{(u_{13},u_{14})} = Q_2Y_{(u_6,u_8)} + \sum_{i=1}^{q-1} R_iY_{e_i} = \sum_{i=1}^n Q_2M_3D_{1i}a_i}\IEEEnonumber\\
&\;\;\;\,+\> \sum_{i=1}^{q-1} \sum_{j=1}^n \{Q_2M_3A_{ij} + R_iW_iB_{ij} \IEEEnonumber\\
&\;\;\;\,\hfill +\> \sum_{k=1,k\neq i}^{q-1} R_k(W_kB_{ij}+U_{ijk})\}b_{ij}\IEEEnonumber\\
&\;\;\;\,+\> \sum_{i=1}^n\{ Q_2D_{3i} + \big(\sum_{k=1}^{q-1}R_kW_k\big) D_{2i}\} c_i
\end{IEEEeqnarray}

Since a terminal $t_i\in T_c$ computes $c_i$, we have the following inequalities. For $1\leq i,j\leq n$ since the component of $a_i$ is zero at all $t_j\in T_c$,
\begin{equation}
G_j(Q_1M_1D_{1i} + D_{4i}) = 0
\end{equation}
As the components of for $b_{ij}$ is also zero at all $t_k\in T_c$, for $1\leq i\leq (q-1)$ and $1\leq j,k\leq n$ we have,
\begin{equation}
G_k\{Q_1(M_1A_{ij} + M_2B_{ij})\} = 0\label{etcba}
\end{equation}
Now, since the terminal $t_i\in T_c$ retrieves $c_i$ for $1\leq i,j\leq n$ and $j\neq i$ we have,
\begin{IEEEeqnarray}{l}
G_i(Q_1M_2D_{2i}) = I\label{etc3a}\\
G_i(Q_1M_2D_{2j}) = 0\label{etc3b}
\end{IEEEeqnarray}

Now consider the $n$ terminals in the set $T_{b_i}$ for $1\leq i\leq (q-1)$. Since the component of $a_k$ for $1\leq k\leq n$ at $t_j\in T_{b_i}$ for $1\leq j\leq n$ is zero, we have, for $1\leq i\leq (q-1)$ and $1\leq j,k\leq n$:
\begin{equation}
R_{ij}K_i(M_4M_1D_{1k} + M_5M_3D_{1k}) = 0 \label{etba1}
\end{equation}
Since the terminal $t_j\in T_{b_i}$ computes the information $b_{ij}$, we have, for $1\leq i,k\leq (q-1), i\neq k$, $1\leq j,m,l\leq n$ and $m\neq j$:
\begin{IEEEeqnarray}{l}
R_{ij}K_i\{ M_4(M_1A_{ij} + M_2B_{ij}) + M_5M_3A_{ij}\} = I\label{etbb1}\\
R_{ij}K_i\{ M_4(M_1A_{im} + M_2B_{im}) + M_5M_3A_{im}\} = 0\label{etbb2}\\
R_{ij}\{K_i( M_4(M_1A_{kl} + M_2B_{kl}) + M_5M_3A_{kl}) {+} J_{kli}\} = 0\IEEEeqnarraynumspace
\end{IEEEeqnarray}
Since the component of $c_k$ for $1\leq k\leq n$ is zero at $t_j\in T_{b_i}$, we have for $1\leq i\leq (q-1)$ and $1\leq j,k\leq n$,
\begin{equation}
R_{ij}K_i(M_4M_2D_{2k} + M_5D_{3k}) = 0 \label{etbc1}
\end{equation}

Let us consider the terminals in the set $T_{a}$. Since $t_i\in T_a$ computes the message $a_i$, for $1\leq i,j\leq n$ and $j\neq i$, we have:
\begin{IEEEeqnarray}{l}
L_iQ_2M_3D_{1i} = I \label{etaa1}\\
L_iQ_2M_3D_{1j} = 0 \label{etaa2}
\end{IEEEeqnarray}
At any $t_{l}\in T_a$ for $1\leq i\leq (q-1)$ and $1\leq l,j\leq n$ the component of $b_{ij}$ is zero. So we have,
\begin{equation}
L_l\{Q_2M_3A_{ij} + R_iW_iB_{ij} + \sum_{k=1,k\neq i}^{q-1} R_k(W_kB_{ij}+U_{ijk})\}\label{etab1}
\end{equation}
At a terminal $t_j\in T_a$, since the component of $c_i$ is zero, for $1\leq i,j\leq n$:
\begin{equation}
L_j\{ Q_2D_{3i} + \big(\sum_{k=1}^{q-1}R_kW_k\big) D_{2i}\} = 0 \label{etac1}
\end{equation}

Now consider the terminals in the set $T_{c_i}$ for $1{\leq} i{\leq} (q{-}1)$. Since at $t_{l}\in T_{c_i}$ the component of $b_{ij}$ for $1\leq i\leq (q-1)$ and $1\leq l,j\leq n$ is zero, we have,
\begin{equation}
Z_{il}(V_iW_iB_{ij} + E_{ij}) = 0
\end{equation}
For $1\leq k\leq (q-1),k\neq i$ the component of $b_{kj}$ for $1\leq j\leq n$ is zero as well,
\begin{equation}
Z_{il}\{V_i(W_iB_{kj} + U_{kji})\} = 0 \label{etci1}
\end{equation}
Since $t_l\in T_{c_i}$ computes $c_l$, for $1\leq l,m\leq n, l\neq m$, we have,
\begin{IEEEeqnarray}{l}
Z_{il}V_iW_iD_{2l} = I \label{etcc1}\\
Z_{il}V_iW_iD_{2m} = 0 \label{etcc2}
\end{IEEEeqnarray}

The rest of the proof requires a lemma and a corollary which are presented next.

Let us consider $A = {\begin{bmatrix} A_1 & A_2 & \cdots & A_n  \end{bmatrix}}^T$ and $B = \begin{bmatrix} B_1 & B_2 &\cdots & B_n \end{bmatrix}$ where $A_i$ and $B_i$ for $1\leq i\leq n$ are matrices of size $d\times dn$ and $dn\times d$ respectively. Let $I_{dn}$ denote an identity matrix of size $dn \times dn$.
\begin{lemma}\label{lem1}
If $A_iB_i = I$ but $A_iB_j = 0$ for $1\leq i,j\leq n, i\neq j$, then $AB = I_{dn}$; and both $A$ and $B$ has an unique inverse.
\end{lemma}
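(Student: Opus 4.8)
The plan is to read the conclusion off directly from the block structure of $A$ and $B$, so the argument is essentially a dimension count followed by a block multiplication. First I would unpack the notation. Writing $A = \begin{bmatrix} A_1 & A_2 & \cdots & A_n \end{bmatrix}^T$ as the \emph{vertical stacking} of the $n$ blocks $A_1,\ldots,A_n$, each of size $d\times dn$, makes $A$ a matrix of size $nd\times dn$, i.e.\ a square $dn\times dn$ matrix. Likewise $B = \begin{bmatrix} B_1 & B_2 & \cdots & B_n \end{bmatrix}$ is the horizontal concatenation of the $n$ blocks $B_1,\ldots,B_n$, each of size $dn\times d$, so $B$ is also a square $dn\times dn$ matrix. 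Hence the product $AB$ is a well-defined $dn\times dn$ matrix.

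Next I would compute $AB$ blockwise. Partitioning $AB$ into an $n\times n$ array of $d\times d$ blocks, the $(i,j)$ block is exactly $A_iB_j$. By hypothesis $A_iB_i = I$ for every $i$ and $A_iB_j = 0$ whenever $i\neq j$, so this block array carries the $d\times d$ identity on each diagonal block and the zero matrix off the diagonal. That is precisely the identity matrix $I_{dn}$, which gives $AB = I_{dn}$.

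For the uniqueness of the inverse I would invoke the standard fact that, over a field, a one-sided inverse of a square matrix is automatically a two-sided inverse. Since $A$ and $B$ are square of the same size $dn\times dn$ and $AB = I_{dn}$, both are nonsingular, with $A^{-1} = B$ and $B^{-1} = A$; the inverse of an invertible matrix is unique, which is the assertion.

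I do not anticipate any genuine obstacle here: the statement is immediate once the transpose notation in the definition of $A$ is correctly read as stacking the blocks $A_i$ vertically rather than transposing each $A_i$ individually. The only two points needing mild care are this notational convention and the appeal to the square-matrix result that $AB = I_{dn}$ forces $BA = I_{dn}$, both of which are routine.
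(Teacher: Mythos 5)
Your proof is correct and follows essentially the same route as the paper: the same blockwise computation showing the $(i,j)$ block of $AB$ is $A_iB_j$ (identity on the diagonal, zero off it), followed by the same observation that $AB = I_{dn}$ for square matrices forces both $A$ and $B$ to be invertible. Your reading of the transpose notation as vertical stacking of the blocks matches the paper's intended (and actually used) convention, and your explicit remark that a one-sided inverse of a square matrix is two-sided just makes precise what the paper states as "both of $A$ and $B$ has to be full rank matrices."
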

\begin{IEEEproof}
\begin{IEEEeqnarray*}{rl}
AB &= \begin{bmatrix} A_1 \\ A_2 \\ \vdots \\ A_n  \end{bmatrix} \begin{bmatrix} B_1 & B_2 & \cdots & B_n  \end{bmatrix}\\
&= \begin{bmatrix}
A_1B_1 & A_1B_2 & \cdots & A_1B_n\\
A_2B_1 & A_2B_2 & \cdots & A_2B_n\\
\vdots &\vdots &\vdots &\vdots\\
A_nB_1 & A_nB_2 & \cdots & A_nB_n
\end{bmatrix}\\
&= \begin{bmatrix}
I & 0 & \cdots & 0\\
0 & I & \cdots & 0\\
\vdots &\vdots &\vdots &\vdots\\
0 & 0 & \cdots & I
\end{bmatrix} = I_{dn}
\end{IEEEeqnarray*}
Now since both $A$ and $B$ are matrices of size $dn\times dn$, for $AB = I_{dn}$ to hold, both of $A$ and $B$ has to be full rank matrices. This completes the proof.
\end{IEEEproof}
\begin{corollary}\label{coro1}
For $1\leq i,j\leq n$, if $A_iB_j = 0$, then $AB = 0$.
\end{corollary}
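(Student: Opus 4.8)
The plan is to reuse, essentially verbatim, the block-matrix expansion of $AB$ that was already carried out in the proof of Lemma~\ref{lem1}. Since $A$ is the vertical stack of the row-blocks $A_1,\ldots,A_n$ and $B$ is the horizontal concatenation of the column-blocks $B_1,\ldots,B_n$, multiplying them produces the $n\times n$ block matrix whose $(i,j)$-th block is precisely $A_iB_j$. This block identity holds independently of the values taken by the individual blocks, so I would invoke it directly here rather than recompute the product from scratch.

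With that identity in hand, the conclusion is immediate. Under the hypothesis of the corollary we have $A_iB_j = 0$ for every pair $1\le i,j\le n$; the crucial observation is that — unlike the situation in Lemma~\ref{lem1}, where only the off-diagonal products were assumed to vanish — here the quantifier ranges over \emph{all} ordered pairs, so the diagonal blocks $A_iB_i$ are included as well. Substituting $A_iB_j=0$ into the block expansion kills each of the $n^2$ blocks of $AB$, and therefore $AB = 0$.

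I do not anticipate any genuine obstacle in this argument: no rank or invertibility claim is being made, so the full-rank reasoning used at the end of Lemma~\ref{lem1} is not needed. The only point deserving a line of care is the reading of the hypothesis as holding over all pairs $(i,j)$ including $i=j$, which is exactly what lets every block, diagonal and off-diagonal alike, vanish.
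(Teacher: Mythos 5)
Your proof is correct and matches the paper's intent exactly: the paper states this corollary without proof precisely because it follows immediately from the block-matrix expansion $AB = \bigl[A_iB_j\bigr]_{1\leq i,j\leq n}$ computed in the proof of Lemma~\ref{lem1}, which is what you invoke. Your observation that the hypothesis covers all pairs, including the diagonal blocks, so that every block of $AB$ vanishes, is the whole argument.
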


We now define the following matrices. Note that all of these newly defined matrices are square matrices of size $dn \times dn$.
\begin{IEEEeqnarray}{l}
G = {\begin{bmatrix} G_1 & G_2 & \cdots & G_n\end{bmatrix}}^T \label{g}\\
Q_1(M_1A_{i} + M_2B_{i}) = \begin{bmatrix}Q_1(M_1A_{i1} + M_2B_{i1}) \\ Q_1(M_1A_{i2} + M_2B_{i2}) \\ \vdots \\ Q_1(M_1A_{in} + M_2B_{in})  \end{bmatrix}^T\label{q}\\
Q_1M_2D_2 = \begin{bmatrix} Q_1M_2D_{21} & Q_1M_2D_{22} \!& {\cdots} &\! Q_1M_2D_{2n} \end{bmatrix}\IEEEeqnarraynumspace\label{q1}\\
R_iK_i = \begin{bmatrix}R_{i1}K_i & R_{i2}K_i & \cdots & R_{in}K_i \end{bmatrix}^T\label{r}
\end{IEEEeqnarray}
\begin{IEEEeqnarray}{l}
M_4M_1D_{1} + M_5M_3D_{1} = \begin{bmatrix}M_4M_1D_{11} + M_5M_3D_{11} \\ M_4M_1D_{12} + M_5M_3D_{12} \\ \vdots \\ M_4M_1D_{1n} + M_5M_3D_{1n} \end{bmatrix}^T\label{m4}\\
M_4(M_1A_{i} + M_2B_{i}) + M_5M_3A_{i} \IEEEnonumber \\\hfill =\> \begin{bmatrix} M_4(M_1A_{i1} + M_2B_{i1}) + M_5M_3A_{i1} \\ M_4(M_1A_{i2} + M_2B_{i2}) + M_5M_3A_{i2} \\ \vdots \\ M_4(M_1A_{in} + M_2B_{in}) + M_5M_3A_{in} \end{bmatrix}^T\IEEEeqnarraynumspace\label{m4m2m5}\\
M_4M_2D_{2} + M_5D_{3} = \begin{bmatrix} M_4M_2D_{21} + M_5D_{31} \\ M_4M_2D_{22} + M_5D_{32} \\ \vdots \\ M_4M_2D_{2n} + M_5D_{3n} \end{bmatrix}^T\label{m4m5}\\
L = \begin{bmatrix}L_1 & L_2 & \cdots & L_n \end{bmatrix}^T\label{l}\\
Q_2M_3D_1 {=} \begin{bmatrix}Q_2M_3D_{11} & Q_2M_3D_{12} \!&{\cdots} &\! Q_2M_3D_{1n} \end{bmatrix}\label{q2} \IEEEeqnarraynumspace
\end{IEEEeqnarray}\vspace*{-12pt}
\begin{IEEEeqnarray}{l}
Q_2M_3A_{i} + R_iW_iB_{i} + \sum_{k=1,k\neq i}^{q-1} R_k(W_kB_{i}{+}U_{ik})\IEEEnonumber\\
{=}\!\! \begin{bmatrix}\! Q_2M_3A_{i1} {+} R_iW_iB_{i1} {+} \sum_{k{=}1,k{\neq} i}^{q-1} R_k(W_kB_{i1}{+}U_{i1k})\! \\
\! Q_2M_3A_{i2} {+} R_iW_iB_{i2} {+} \sum_{k{=}1,k{\neq} i}^{q-1} R_k(W_kB_{i2}{+}U_{i2k})\!\\ \vdots \\
\! Q_2M_3A_{in} {+} R_iW_iB_{in} {+} \sum_{k{=}1,k{\neq} i}^{q-1} R_k(W_kB_{in}{+}U_{ink}) \! \end{bmatrix}^T \IEEEeqnarraynumspace\!\!\!\!\!\!\label{q2r}\\
Q_2D_{3} + \big(\sum_{k=1}^{q-1}R_kW_k\big) D_{2} \IEEEnonumber\\ \hspace{80pt}=\> \begin{bmatrix} Q_2D_{31} + \big(\sum_{k=1}^{q-1}R_kW_k\big) D_{21} \\ Q_2D_{32} + \big(\sum_{k=1}^{q-1}R_kW_k\big) D_{22}  \\ \vdots \\ Q_2D_{3n} + \big(\sum_{k=1}^{q-1}R_kW_k\big) D_{2n}\end{bmatrix}^T\label{q2d2}\\
Z_{i}V_i = \begin{bmatrix} Z_{i1}V_i & Z_{i2}V_i & \cdots & Z_{in}V_i \end{bmatrix}^T\label{z}\\
W_iB_{k} + U_{ki} = \begin{bmatrix} V_i(W_iB_{k1} + U_{k1i}) \\ V_i(W_iB_{k2} + U_{k2i}) \\ \vdots \\ V_i(W_iB_{kn} + U_{kni}) \end{bmatrix}^T\label{u}\\
W_iD_{2} = \begin{bmatrix} W_iD_{21} & W_iD_{22} & \cdots & W_iD_{2n} \end{bmatrix}\label{wd2}
\end{IEEEeqnarray}
For the rest of this section we redefine $I$ as an identity matrix of size $dn \times dn$.
Using the Corollary~\ref{coro1}, from equations (\ref{etcba}), (\ref{g}) and (\ref{q}), we have, for $1\leq i\leq (q-1)$:
\begin{equation}
G\{Q_1(M_1A_{i} + M_2B_{i})\} = 0 \label{etcb2}
\end{equation}
Using lemma~\ref{lem1}, from equations (\ref{etc3a}), (\ref{etc3b}), (\ref{g}) and (\ref{q1}) we get,
\begin{equation}
G(Q_1M_2D_2) = I \label{etc3c}
\end{equation}
Using Corollary~\ref{coro1} and equations (\ref{etba1}), (\ref{r}) and (\ref{m4}) we have, for $1\leq i\leq (q-1)$:
\begin{equation}
R_iK_i(M_4M_1D_{1} + M_5M_3D_{1}) = 0 \label{etba2}
\end{equation}
Using Lemma~\ref{lem1} and equations (\ref{etbb1}), (\ref{etbb2}), (\ref{r}) and (\ref{m4m2m5})  we have, for $1\leq i\leq (q-1)$:
\begin{equation}
R_{i}K_i\{ M_4(M_1A_{i} + M_2B_{i}) + M_5M_3A_{i}\} = I \label{etbb3}
\end{equation}
Using Corollary~\ref{coro1} on equations (\ref{etbc1}), (\ref{r}) and (\ref{m4m5}) we have, for $1\leq i\leq (q-1)$:
\begin{equation}
R_{i}K_i(M_4M_2D_{2} + M_5D_{3}) = 0 \label{etbc2}
\end{equation}
Using Lemma~\ref{lem1} on equations (\ref{etaa1}), (\ref{etaa2}), (\ref{l}) and (\ref{q2}) for we get:
\begin{equation}
LQ_2M_3D_1 = I \label{etaa3}
\end{equation}
Using Corollary~\ref{coro1} and equations (\ref{etab1}), (\ref{l}) and (\ref{q2r}) we get, for $1\leq i\leq (q-1)$:
\begin{equation}
L(Q_2M_3A_{i} + R_iW_iB_{i} + \sum_{k=1,k\neq i}^{q-1} R_k(W_kB_{i} + U_{ik})) = 0 \label{etab2}
\end{equation}
Using Lemma~\ref{lem1}, and equations (\ref{etac1}), (\ref{l}) and (\ref{q2d2}) we have:
\begin{equation}
L(Q_2D_{3} + \big(\sum_{k=1}^{q-1}R_kW_k\big) D_{2}) = 0 \label{etac2}
\end{equation}
Using Corollary~\ref{coro1} and equations (\ref{etci1}), (\ref{z}) and (\ref{u}) we get, for $1\leq i,k\leq (q-1), k\neq i$,
\begin{equation}
Z_{i}V_i(W_iB_{k} + U_{ki}) = 0 \label{etci2}
\end{equation}
Using Lemma~\ref{lem1}, and equations (\ref{etcc1}), (\ref{etcc2}), (\ref{z}) and (\ref{wd2}) we have:
\begin{equation}
Z_{i}V_iW_iD_{2} = I \label{etcc3}
\end{equation}

Now, from equation (\ref{etc3c}) the matrices $G$, $D_2$, $M_2$  and $Q_1$ are invertible. From equation (\ref{etbb3}), for $1\leq i\leq (q-1)$, $K_i$ and $R_i$ are invertible. $D_1$, $L$ and $Q_2$ is invertible from equation (\ref{etaa3}). From equation (\ref{etcc3}) $V_i$ and $Z_i$ are invertible. From (\ref{etcb2}) since $G$ and $Q_1$ both are invertible, we have:
\begin{equation}
M_1A_{i} + M_2B_{i} = 0 \label{etcb3}
\end{equation}
Since bot $R_i$ and $K_i$ are invertible, we have from equation (\ref{etba2}):
\begin{equation}
M_4M_1D_{1} + M_5M_3D_{1} = 0\label{neweq1}
\end{equation}
And from (\ref{etbc2}) we have:
\begin{equation}
M_4M_2D_{2} + M_5D_{3} = 0 \label{etbc3}
\end{equation}
Since $D_1$ is also invertible, from equation (\ref{neweq1}) we must have,
\begin{equation}
M_4M_1 + M_5M_3 = 0 \label{etba3}
\end{equation}
Substituting (\ref{etcb3}) in (\ref{etbb3}) we have
\begin{equation}
R_{i}K_iM_5M_3A_{i} = I \label{etbb4}
\end{equation}
Since $Z_i$ and $V_i$ are invertible, from equation (\ref{etci2}) we must have, for $1\leq i,k\leq (q-1), i\leq k$:
\begin{equation}
W_iB_{k} + U_{ki} = 0 \label{etci3}
\end{equation}
Substituting equation (\ref{etci3}) in equation (\ref{etab2}), and noting that $L$ is invertible, we have for $1\leq i\leq (q-1)$:
\begin{equation}
Q_2M_3A_{i} + R_iW_iB_{i} = 0 \label{etab3}
\end{equation}
$M_3$ and $A_i$ are invertible from (\ref{etbb4}) for $1\leq i\leq (q-1)$, and since $Q_2$ is invertible, $Q_2M_3A_i$ is invertible, which leads that $R_iW_iB_{i}$ is invertible from equation (\ref{etab3}), and hence $B_i$ is invertible. So from (\ref{etab3}), for $1\leq i\leq (q-1)$,
\begin{equation}
Q_2M_3A_{i}B_{i}^{-1} + R_iW_i = 0 \label{etab4}
\end{equation}
Since $L$ is invertible, we have from (\ref{etac2})
\begin{equation}
Q_2D_{3} + \sum_{k=1}^{q-1}R_kW_kD_{2} = 0
\end{equation}
Since $D_2$ is invertible, we have:
\begin{equation}
Q_2D_{3}D_2^{-1} + \sum_{k=1}^{q-1}R_kW_k = 0 \label{etac3}
\end{equation}
Substituting $R_kW_k$ for $1\leq k\leq (q-1)$, from (\ref{etab4}) we have,
\begin{IEEEeqnarray}{l}
Q_2D_{3}D_2^{-1} + \sum_{k=1}^{q-1}-Q_2M_3A_{k}B_{k}^{-1} = 0 \label{e1}
\end{IEEEeqnarray}
Since $M_2, A_i$ and $B_i$ are invertible for $1\leq i\leq (q-1)$, from (\ref{etcb3}) $M_1$ is also invertible, and hence we have, for $1\leq i\leq (q-1)$:
\begin{equation}
A_{i}B^{-1}_i = - M_1^{-1}M_2 \label{etcb4}
\end{equation}
Substituting equation (\ref{etcb4}) in equation (\ref{e1}) we have:
\begin{IEEEeqnarray*}{l}
Q_2D_{3}D_2^{-1} + \sum_{k=1}^{q-1}-Q_2M_3(-M_1^{-1}M_2) = 0
\end{IEEEeqnarray*}
Since all constituent matrices are square,
\begin{IEEEeqnarray}{l}
Q_2D_{3}D_2^{-1} + \sum_{k=1}^{q-1}Q_2M_3M_1^{-1}M_2 = 0 \label{e2}
\end{IEEEeqnarray}
Since $M_5M_3$ is invertible from equation (\ref{etbb4}), and $M_1$ is invertible as shown above, from equation (\ref{etba3}) we have:
\begin{equation}
M_3M_1^{-1} = -M_5^{-1}M_4 \label{etba4}
\end{equation}
Substituting equation (\ref{etba4}) in equation (\ref{e2}) we have:
\begin{IEEEeqnarray}{l}
Q_2D_{3}D_2^{-1} + \sum_{k=1}^{q-1}Q_2(-M_5^{-1}M_4)M_2 = 0 \label{e3}
\end{IEEEeqnarray}
Now, from equation (\ref{etbc3}) we have:
\begin{equation}
D_{3}D_2^{-1} = -M_5^{-1}M_4M_2
\end{equation}
Substituting this in (\ref{e3}) we have:
\begin{IEEEeqnarray*}{l}
Q_2D_{3}D_2^{-1} + \sum_{k=1}^{q-1}Q_2D_{3}D_2^{-1} = 0\\
(q)Q_2D_{3}D_2^{-1} = 0
\end{IEEEeqnarray*}
Since $Q_2$,$D_3$ and $D_2$ are all invertible matrices, it must be that $q=0$. Now the fact that an element is equal to zero in a finite field if and only if the characteristic divides the element proves that the network has a rate $\frac{1}{n}$ fractional linear network coding solution only if the characteristic of the finite field divides $q$.

We now show that the network $\mathcal{N}_1$ has a $(1,n)$ fractional linear network coding solution if $q=0$. For this section, let $\bar{a}_i$ denote an $n$-length vector whose $i^{\text{th}}$ component is $a_i$ and all other component is zero. Let $\bar{c}_i$ to denote an $n$-length vector whose $i^{\text{th}}$ component is $c_i$ and all other component is zero. Also let $\bar{b}_{ij}$ denotes an $n$-length vector which has zero in all of its components but the $j^{\text{th}}$ one, which is equal to $b_{ij}$. Note that $a_i$, $c_i$ for $1\leq i\leq n$ and $b_{ij}$ for $1\leq i\leq (q-1), 1\leq j\leq n$ are the source processes. Now, it can be seen that by choosing the appropriate local coding matrices, the messages shown below can be transmitted by the corresponding edges.
\begin{IEEEeqnarray*}{l}
Y_{(u_1,u_3)} = \sum_{i=1}^{n} \bar{a}_i + \sum_{i=1}^{p-1} \sum_{j=1}^{n} \bar{b}_{ij}\\
Y_{(u_2,u_4)} = \sum_{i=1}^{q-1} \sum_{j=1}^{n} \bar{b}_{ij} + \sum_{i=1}^{n} \bar{c}_i\\
Y_{(u_5,u_7)} = Y_{(u_1,u_3)} - Y_{(u_2,u_4)} = \sum_{i=1}^{n} \bar{a}_i - \sum_{i=1}^{n} \bar{c}_i\\
Y_{(u_6,u_8)} = Y_{(u_1,u_3)} - \sum_{i=1}^{n} \bar{c}_i = \sum_{i=1}^{n} \bar{a}_i + \sum_{i=1}^{q-1} \sum_{j=1}^{n} \bar{b}_{ij} - \sum_{i=1}^{n} \bar{c}_i\\
\text{for } 1\leq i\leq q-1: \quad Y_{e_i} = \sum_{j=1}^{n} \bar{b}_{ij} + \sum_{i=1}^{n} \bar{c}_i\\
Y_{(u_9,u_{10})} = Y_{(u_6,u_8)} - Y_{(u_5,u_7)} = \sum_{i=1}^{q-1} \sum_{j=1}^{n} \bar{b}_{ij}\\
Y_{(u_{13},u_{14})} = Y_{(u_6,u_8)} - \sum_{i=1}^{q-1} Y_{e_i} = \sum_{i=1}^{n} \bar{a}_i - \sum_{i=1}^{n} \bar{c}_i - \sum_{i=1}^{q-1} \sum_{i=1}^{n} \bar{c}_i \\=\> \sum_{i=1}^{n} \bar{a}_i - \sum_{i=1}^{q} \sum_{i=1}^{n} \bar{c}_i = \sum_{i=1}^{n} \bar{a}_i -  \sum_{i=1}^{n} q\bar{c}_i = \sum_{i=1}^{n} \bar{a}_i\\
Y_{(u_{11},u_{12})} = \sum_{i=1}^{n} \bar{a}_i - Y_{(u_5,u_7)} = \sum_{i=1}^{n} \bar{c}_i\\
\text{for } 1\leq i\leq q-1:\\ Y_{(v_i,v_i^\prime)} = Y_{(u_9,u_{10})} - \sum_{k=1,k\neq i}^{q-1} \sum_{j=1}^{n} \bar{b}_{ij} = \sum_{j=1}^{n} \bar{b}_{ij}\\
\text{for } 1\leq i\leq q-1: \quad Y_{(w_i,w_i^\prime)} = Y_{e_i} - \sum_{j=1}^{n} \bar{b}_{ij} = \sum_{i=1}^{n} \bar{c}_i
\end{IEEEeqnarray*}
Let $\check{u}(i)$ be a unit row vector of length $n$ which has $i^{\text{th}}$ component equal to one and all other component equal to zero. Then from the vector $\sum_{i=1}^{n} \bar{a}_i$, $a_i$ for any $1\leq i\leq n$ can be determined by the dot product $\check{u}(i) \cdot (\sum_{i=1}^{n} \bar{a}_i)$. Similarly for any $1\leq i\leq (q-1)$, $b_{ij} = \check{u}(j) \cdot (\sum_{j=1}^{n} \bar{b}_{ij})$. For $1\leq i\leq n$, $c_i$ can be determined similarly from $\sum_{i=1}^{n} \bar{c}_i$.
\end{IEEEproof}

\begin{theorem}\label{thm1}
For any non-zero positive rational number $\frac{k}{n}$ and for any finite set of prime numbers $\{p_1,p_2,\ldots,p_l\}$ there exists a network which has a rate $\frac{k}{n}$ fractional linear network coding solution if and only if the characteristic of the finite field belong to the given set of primes.
\end{theorem}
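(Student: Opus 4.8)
The plan is to reduce the statement to Lemma~\ref{lem3}. First I would dispose of the ``set of primes'' requirement by choosing $q=p_1p_2\cdots p_l$. Since the characteristic of any finite field is prime, $p\mid q$ holds exactly when $p\in\{p_1,\dots,p_l\}$, so the phrase ``characteristic divides $q$'' in Lemma~\ref{lem3} becomes ``characteristic lies in the prescribed set''. With this choice the network $\mathcal{N}_1$ of Fig.~\ref{genfano1/n} already settles the case of rate $\tfrac1n$. I would next observe that the construction of $\mathcal{N}_1$ is parametrised by the number of source/terminal copies per group: running Lemma~\ref{lem3} with $n/k$ in place of $n$ yields a network whose rate is $\tfrac{1}{\,n/k\,}=\tfrac kn$. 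Hence the case $k\mid n$ follows immediately, and since realising the lowest-terms rate $\tfrac{k'}{n'}$ (with $k'/n'=k/n$) also certifies the rate $\tfrac kn$ demanded by the definition, I may assume $\gcd(k,n)=1$; the remaining and genuinely new case is $k\ge 2$ coprime to $n$.

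For general $\tfrac kn$ I would build a network $\mathcal{N}_1'$ that keeps the same generalized-Fano/Fano forcing core as $\mathcal{N}_1$ but widens the bottleneck edges and rebalances the number of source and terminal nodes per group, so that in a $(dk,dn)$ code each source emits a $dk$-vector while the bottleneck edges carry $dn$-vectors (giving rate exactly $\tfrac kn$). The \emph{converse} direction would then run exactly as in Lemma~\ref{lem3}: each terminal's recovery and zero-forcing condition is a blockwise identity of the type handled by Lemma~\ref{lem1} and Corollary~\ref{coro1}, these assemble into square $dn\times dn$ matrix identities, the relevant local coding matrices ($D_1,D_2,M_1,\dots,Q_2$) are forced invertible, and propagating invertibility exactly as in the passage culminating in $(q)Q_2D_3D_2^{-1}=0$ shows $q=0$ in the field, i.e.\ the characteristic divides $q$.

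For the \emph{achievability} direction (characteristic divides $q$) I would generalise the explicit coordinate-packing solution at the end of Lemma~\ref{lem3}: run the core computation across the $dn$ coordinates, assign to each source a $dk$-dimensional block, and recover each demand by a projection, using that over a field whose characteristic divides $q$ the core cancellation $\sum_i q\bar c_i=0$ makes the demanded messages separable just as in the $(1,n)$ solution.

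The hard part is the \emph{numerator}. A tightly packed $\mathcal{N}_1$-type gadget necessarily has rate a unit fraction: if each group holds $M$ sources of dimension $\sigma$ and the bottleneck has width $M\sigma$, the rate is $\sigma/(M\sigma)=1/M$. Thus a non-unit fraction $\tfrac kn$ (with $k\ge 2$ and $\gcd(k,n)=1$) forces the packing to be \emph{non-tight}, and then the stacked matrices $G,\,Q_1M_2D_2,\dots$ used in the converse are no longer square, so Lemma~\ref{lem1} can no longer certify their invertibility and the final step collapses. Restoring this squareness while simultaneously (i) pinning the rate to exactly $\tfrac kn$ for \emph{every} $d$, (ii) keeping the achievability packing valid, and (iii) ensuring that no characteristic-independent solution is introduced, is what I expect to be the technical heart of the argument; the natural route is to route the extra numerator data through additional parallel copies of the forcing core, so that all $dk$ source coordinates still traverse the characteristic-dependent bottleneck, arranged so that the terminal recovery equations reassemble into square identities.
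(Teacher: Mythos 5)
Your preliminary reductions are correct: choosing $q=p_1p_2\cdots p_l$ so that ``characteristic divides $q$'' becomes ``characteristic lies in the set,'' the lowest-terms observation, and the case $k\mid n$ (instantiating $\mathcal{N}_1$ with parameter $n/k$, and noting that the only-if direction of Lemma~\ref{lem3} applies to a $(d,dn)$ code for \emph{any} single $d$) all go through. But the entire content of Theorem~\ref{thm1} beyond Lemma~\ref{lem3} is the case you defer, $k\ge 2$ with $\gcd(k,n)=1$, and there your proposal stops at a wish list: you describe a not-yet-constructed network with widened bottlenecks, concede that the stacked matrices $G$, $Q_1M_2D_2$, etc.\ would then fail to be square so that Lemma~\ref{lem1} and Corollary~\ref{coro1} no longer certify invertibility, and explicitly leave the repair of this as ``the technical heart of the argument.'' That is a genuine gap: no construction is exhibited for general $k$, and no converse is proved for it.

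What you missed is that the idea in your final sentence --- routing the numerator through parallel copies of the forcing core --- already suffices, with \emph{no} new matrix analysis. The paper takes $k$ disjoint copies of $\mathcal{N}_1$ (each with $q=p_1\cdots p_l$) and merges corresponding sources and corresponding terminals into single nodes, yielding a network $\mathcal{N}_1^\prime$. Achievability is exactly your packing idea: copy $i$ carries the $i$-th block of $d$ symbols of each merged source, using the $(d,dn)$ solution of $\mathcal{N}_1$ guaranteed by Lemma~\ref{lem3}. The converse is a one-line simulation argument (the paper's Lemma~\ref{lem2}): since $\mathcal{N}_1^\prime$ is literally $k$ copies of $\mathcal{N}_1$, a single copy of $\mathcal{N}_1$ used $k$ times emulates one use of $\mathcal{N}_1^\prime$, so a $(dk,dn)$ code on $\mathcal{N}_1^\prime$ induces a $(dk,dkn)$ code --- rate $\frac{1}{n}$ --- on $\mathcal{N}_1$, and Lemma~\ref{lem3} then forces the characteristic to divide $q$. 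This reduction dissolves the non-square-matrix obstacle you flagged: the converse matrix computation is never re-run on the larger network, so the question of restoring squareness never arises.
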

\begin{IEEEproof}
Let us consider the union of $k$ copies of the network $\mathcal{N}_1$ shown in Fig.~\ref{genfano1/n} each for $q = p_1\times p_2\times \cdots\times p_l$. Denote the $i^{\text{th}}$ copy as $\mathcal{N}_{1i}$. Note that each source and each terminal has $k$ copies in the union. Join all copies of any source or terminal into a single source or terminal respectively. Name this new network as $\mathcal{N}_1^\prime$. We show below that $\mathcal{N}_1^\prime$ has a rate $\frac{k}{n}$ fractional linear network coding solution if and only if the characteristic of the finite field belong to the set $\{p_1,p_2,\ldots,p_l\}$. Before we proceed further, consider the following property of $\mathcal{N}_1$ and $\mathcal{N}_1^\prime$.
\begin{lemma}\label{lem2}
If $\mathcal{N}_1^\prime$ has a $(dk,dn)$ fractional linear network coding solution for any non-zero positive integer $d$, then $\mathcal{N}_1$ has a $(dk,dkn)$ fractional linear network coding solution.
\end{lemma}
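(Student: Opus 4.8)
The plan is to construct a $(dk,dkn)$ solution of $\mathcal{N}_1$ directly from the given $(dk,dn)$ solution of $\mathcal{N}_1^\prime$ by a stacking (simulation) argument. The key structural observation is that $\mathcal{N}_1^\prime$ is built from $k$ disjoint copies of the intermediate portion of $\mathcal{N}_1$, glued together only at the merged sources and the merged terminals; since sources have no incoming edges and terminals no outgoing edges, no edge of $\mathcal{N}_1^\prime$ joins an intermediate node of one copy to an intermediate node of another copy. Consequently, in the given $(dk,dn)$ solution, the signal carried on any edge lying inside the $i$-th copy is produced purely by the processing within that copy acting on the common source messages $X_s\in\mathbb{F}_q^{dk}$.

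First I would fix the correspondence: each edge $e$ of $\mathcal{N}_1$ has exactly $k$ images $e^{(1)},\ldots,e^{(k)}$ in $\mathcal{N}_1^\prime$, one per copy. I assign to the single edge $e$ of $\mathcal{N}_1$ the $dkn$-length vector obtained by vertically stacking the $dn$-length signals carried by $e^{(1)},\ldots,e^{(k)}$. The local coding matrices of the desired $(dk,dkn)$ solution are then read off block-wise from the given solution: for a source edge I stack the $k$ source matrices (each $dn\times dk$) vertically to obtain a $dkn\times dk$ matrix; for an edge leaving an intermediate node I place the $k$ corresponding $dn\times dn$ matrices along the diagonal to obtain a block-diagonal $dkn\times dkn$ matrix; and for a terminal I horizontally concatenate the $k$ decoding matrices (each $dk\times dn$) to obtain a $dk\times dkn$ matrix.

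Next I would verify, by induction following a topological ordering of the edges, that with these matrices the signal on each edge $e$ of $\mathcal{N}_1$ is indeed the stack of the signals on $e^{(1)},\ldots,e^{(k)}$ in $\mathcal{N}_1^\prime$. The base case is immediate for source edges (identical messages, stacked matrices), and the inductive step for an intermediate edge follows because a block-diagonal matrix applied to a stacked input returns the stack of the block-wise products, which are exactly the signals computed independently inside each copy. Finally, at a terminal $t$, which in $\mathcal{N}_1$ has a single incoming edge carrying the stacked signal, the horizontally concatenated decoding matrix reproduces $\sum_{i=1}^{k} A_{\{e^{(i)},t\}}Y_{e^{(i)}}$, which equals the demanded symbol $X_t$ because the given $\mathcal{N}_1^\prime$ solution already decodes it. Hence every terminal of $\mathcal{N}_1$ recovers its demand, yielding the required $(dk,dkn)$ solution.

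The routine bookkeeping — matching each edge and each coding matrix of $\mathcal{N}_1$ to its $k$ counterparts and checking the dimensions — is the only laborious part; the one genuine point to nail down is the structural claim that the copies interact solely through the shared sources and terminals, since this is precisely what licenses the block-diagonal form and ensures that the stacked signals propagate without cross-copy contamination.
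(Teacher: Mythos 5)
Your proposal is correct and takes essentially the same approach as the paper: the paper's own (two-sentence) proof is precisely the observation that, since $\mathcal{N}_1^\prime$ is $k$ copies of $\mathcal{N}_1$ glued only at the sources and terminals, whatever $\mathcal{N}_1^\prime$ delivers in $x$ uses can be delivered by $\mathcal{N}_1$ in $kx$ uses. Your stacking construction with block-diagonal intermediate matrices and concatenated source/decoding matrices is just the explicit formalization of that simulation argument, which the paper leaves implicit.
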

\begin{proof}
This is true since the information that can be sent using the network $\mathcal{N}_1^\prime$ for $x$ times, can be sent using the network $\mathcal{N}_1$ $kx$ times. This is because $\mathcal{N}_1^\prime$ has $k$ copies of $\mathcal{N}_1$.
\end{proof}
First consider the only if part. Say $\mathcal{N}_1^\prime$ has a rate $\frac{k}{n}$ fractional linear network coding solution even if the characteristic does not belong to the set $\{p_1,p_2,\ldots,p_l\}$. Then from Lemma~\ref{lem2}, the network $\mathcal{N}_1$ has a rate $\frac{1}{n}$ fractional linear network coding solution even if the characteristic does not belong to the given set of primes. However, as shown in Lemma~\ref{lem3}, the $\mathcal{N}_1$ has a rate $\frac{1}{n}$ fractional linear network coding solution if and only if $q=0$ over the finite field. But, as $q = p_1\times p_2\times \cdots\times p_l$, $q=0$ if and only if one of the prime number from the set $\{p_1,p_2,\ldots,p_l\}$ is zero over the finite field. The latter is the case if and only if the characteristic of the finite field is one of the primes in the set. Hence this is a contradiction to the case that $\mathcal{N}_1$ has a rate $\frac{1}{n}$ fractional linear network coding solution even if the characteristic does not belong to the given set of primes.

Now consider the if part. Since $\mathcal{N}_{1i}$ for $1\leq i\leq k$ has a $(1,n)$ fractional linear network coding solution, a $(k,n)$ fractional linear network coding solution can be constructed by keeping the same local coding matrices in all of the copies and sending the $i^{\text{th}}$ component of each source through $\mathcal{N}_{1i}$.
\end{IEEEproof}

\subsection{Network having $\frac{k}{n}$ solution iff characteristic belongs to a given co-finite set of primes.}\label{subsec2}
\begin{figure*}
\centering
\includegraphics[width=\textwidth]{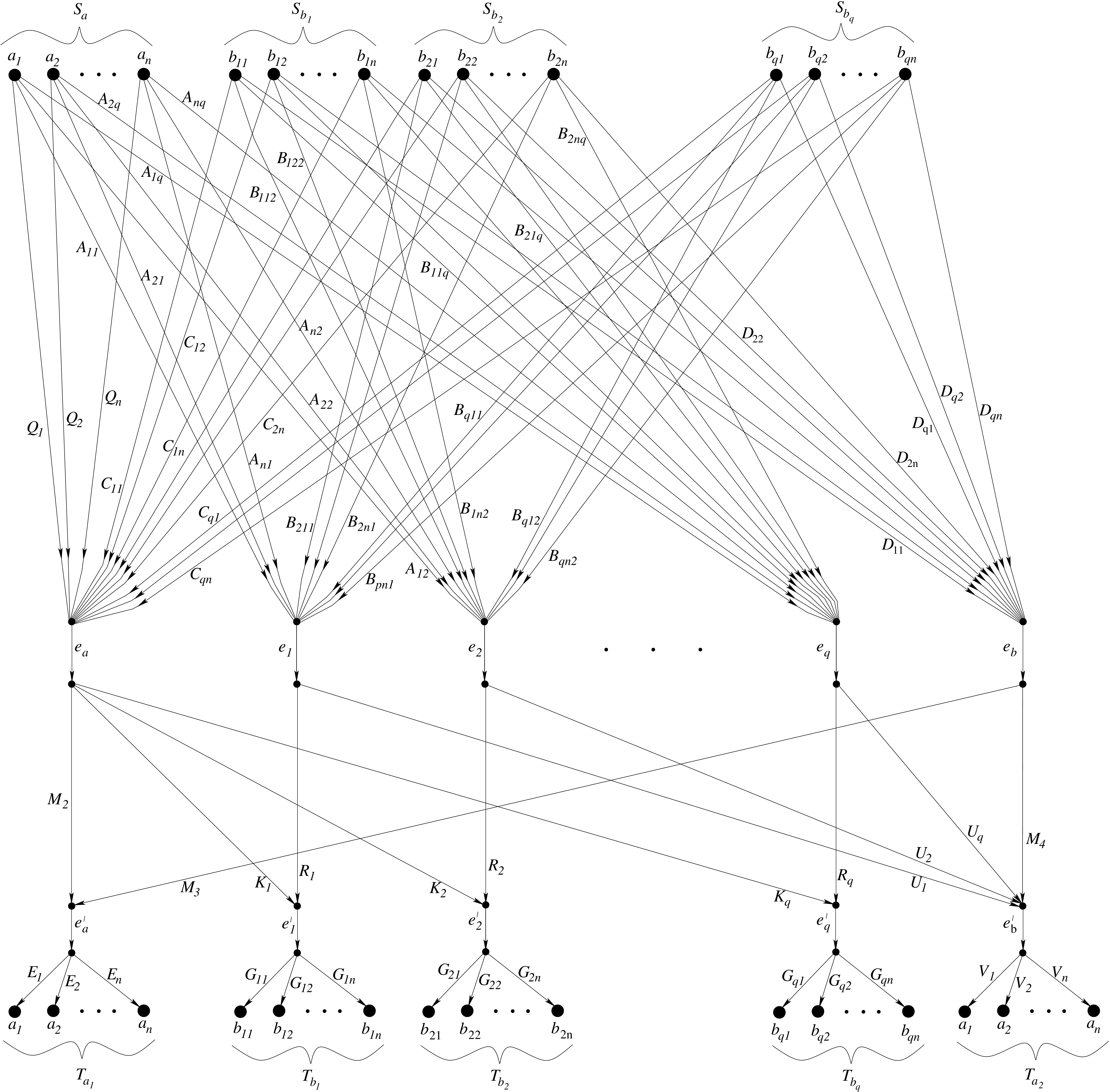}
\caption{A network $\mathcal{N}_2$ which has a rate $1/n$ fractional linear network coding solution if and only if the characteristic of the finite field does not divide $q$.}
\label{nofano1/n}
\end{figure*}
The outline of the contents in this section is similar to that of the last sub-section. Consider the network $\mathcal{N}_2$ shown in Fig.~\ref{nofano1/n}. The sources are partitioned into $q+1$ sets: $S_a$ and $S_{b_i}$ for $1\leq i\leq q$. Each of these sets has $n$ sources. In Fig.~\ref{nofano1/n} the sources are indicated by the message symbol it generates. A source $s_i\in S_a$ generates the message $a_i$. Similarly a source $s_j \in S_{b_i}$ generates the message $b_{ij}$ for $1\leq i\leq q, 1\leq j\leq n$. The set of terminals are partitioned into $q+2$ disjoint sets namely, $T_{a_1},T_{a_2}$ and $T_{b_i}$ for $1\leq i\leq q$. The sets $T_{a_1}$ and $T_{a_2}$ has $n$ terminals in each. Each individual terminal is indicated by the source message it demands. We have the following edges in the network.
\begin{enumerate}
\item $e_a,e_b,e_a^\prime$ and $e_b^\prime$
%\item $e_b$ and $e_b^\prime$
\item $e_i$ and $e_i^\prime$ for $1\leq i\leq q$
\item $(s,tail(e_a))$ for $\forall s\in  S_a \cup \{\cup_{i=1}^q S_{b_i}\}$
\item $(s,tail(e_i))$ for $1\leq i\leq q$ and $\forall s\in \{ S_a, \cup_{j=1,j\neq i}^q S_{b_i}\}$
\item $(s,tail(e_b))$ for $\forall s\in \cup_{i=1}^q S_{b_i}$
\item $(head(e_a),tail(e_a^\prime))$ and $(head(e_b),tail(e_a^\prime))$
\item $(head(e_b),tail(e_b^\prime))$
\item $(head(e_i),tail(e_i^\prime))$ for $1\leq i\leq q$
%\item $(head(e_b),tail(e_a^\prime))$
\item $(head(e_a),tail(e_i^\prime))$ for $1\leq i\leq q$
\item $(head(e_i),tail(e_b^\prime))$ for $1\leq i\leq q$
\end{enumerate}
From each of the nodes $head(e_a^\prime)$, $head(e_i^\prime)$ for $1\leq i\leq q$ and $head(e_b)^\prime$, $n$ outgoing edge emanates and the $head$ node of all such edges is a terminal. The set of $n$ terminals which have a path from node $head(e_a^\prime)$ are denoted by $T_{a_1}$. Similarly, the set of $n$ terminals which have a path from node $head(e_b^\prime)$ are denoted by $T_{a_2}$. And the $n$ terminals in the set $T_{b_i}$ for $1\leq i\leq q$ are connected from the node $head(e_i^\prime)$ by an edge.

Before we show that for any non-zero positive rational number $\frac{k}{n}$ and for a given co-finite set of primes there exist a network which has a rate $\frac{k}{n}$ fractional linear network coding solution if and only if the characteristic of the finite field belong to the given set, consider the following lemma.
\begin{lemma}\label{lem4}
The network shown in Fig.~\ref{nofano1/n} has a rate $\frac{1}{n}$ fractional linear network coding solution if and only if the characteristic of the finite field does not divides $q$.
\end{lemma}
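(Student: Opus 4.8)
The plan is to follow the same template as the proof of Lemma~\ref{lem3}, exploiting the fact that $\mathcal{N}_2$ is the ``non-Fano'' counterpart of the ``Fano'' network $\mathcal{N}_1$: where the earlier argument drove every solution to satisfy $q=0$ in the field, the corresponding manipulation here will instead force $q$ to be \emph{invertible}, i.e.\ the characteristic must \emph{not} divide $q$.

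For the ``only if'' direction I would fix an arbitrary $(d,dn)$ fractional linear network code and write the symbol on each edge as a linear combination of the source vectors $a_i$ and $b_{ij}$, with the composite local coding matrices as coefficients. The three structurally important edges are $e_a$, whose symbol is a combination of all the $a$'s and all the $b$'s; $e_b$, whose symbol involves only the $b$'s; and each $e_i$, whose symbol involves all the $a$'s together with every $b$-group except the $i$-th. Imposing that $T_{a_1}$ recover the $a_i$ from $head(e_a^\prime)$, that each $T_{b_i}$ recover its group $b_{ij}$ from $head(e_i^\prime)$, and that $T_{a_2}$ recover the $a_i$ from $head(e_b^\prime)$, produces a system of per-component matrix equations of exactly the same shape as the analogues of (\ref{etcba})--(\ref{etcc2}). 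I would then stack these component equations into $dn\times dn$ blocks and apply Lemma~\ref{lem1} and Corollary~\ref{coro1} to replace them by block identities and block zeros, as in (\ref{etcb2})--(\ref{etcc3}); reading off the identity relations shows that all the leading coding and decoding matrices are invertible.

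The crux is the terminal family $T_{a_2}$, which must reconstruct every $a_i$ from $head(e_b^\prime)$, a node whose incoming information is a combination of $Y_{e_b}$ and the $Y_{e_i}$. Across the $q$ edges $e_i$ each $a$-group is carried $q$ times and each $b$-group $q-1$ times, so after using $Y_{e_b}$ (which carries all the $b$'s) to cancel the $b$-interference, the residual coefficient of the $a$'s collapses --- by the forced relations among the coding matrices --- to $q$ times an invertible matrix. Since the $T_{a_2}$ decoding matrix and the remaining factors are invertible, the recovery condition (that this coefficient equal the identity) can hold only if $q$ times an invertible matrix is itself invertible, hence only if $q\neq 0$ in $\mathbb{F}_q$. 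This is precisely the statement that the characteristic does not divide $q$.

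For the ``if'' direction I would assume the characteristic does not divide $q$, so $q$ is a unit, and exhibit an explicit $(1,n)$ code in the coordinate notation $\bar a_i,\bar b_{ij}$ used in the second half of the proof of Lemma~\ref{lem3}. Taking $Y_{e_a}=\sum_i\bar a_i+\sum_{i,j}\bar b_{ij}$, $Y_{e_b}=\sum_{i,j}\bar b_{ij}$ and $Y_{e_i}=\sum_i\bar a_i+\sum_{k\neq i}\sum_j\bar b_{kj}$, the terminals decode as $T_{a_1}\!:\,Y_{e_a}-Y_{e_b}=\sum_i\bar a_i$, $T_{b_i}\!:\,Y_{e_a}-Y_{e_i}=\sum_j\bar b_{ij}$, and $T_{a_2}\!:\,q^{-1}\big(\sum_{i=1}^q Y_{e_i}-(q-1)Y_{e_b}\big)=\sum_i\bar a_i$, after which coordinate projections extract the individual symbols; the inverse $q^{-1}$ is exactly what the hypothesis supplies. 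I expect the main obstacle to be the bookkeeping in the ``only if'' direction: correctly identifying which composite matrices become invertible and carrying the interference-cancellation at $T_{a_2}$ through the block-matrix reductions so that the surviving $a$-coefficient is \emph{manifestly} a scalar multiple $q$ of an invertible matrix rather than a generic matrix that merely has to vanish, thereby matching the combinatorial multiplicities (each $a$-group $q$ times, each $b$-group $q-1$ times) against the identities furnished by Lemma~\ref{lem1}.
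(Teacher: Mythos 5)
Your proposal is correct and follows essentially the same route as the paper's own proof: the same per-component decoding constraints stacked into $dn\times dn$ blocks via Lemma~\ref{lem1} and Corollary~\ref{coro1}, the same elimination (using invertibility of $E$, $M_2$, $Q$, $G_i$, $K_i$, $C_i$, $V$) that collapses the $a$-coefficient at $T_{a_2}$ to $q$ times a fixed matrix so that the recovery condition reads $q\bigl(I - VM_4M_3^{-1}E^{-1}\bigr)=I$, forcing $q\neq 0$ in the field, and the identical explicit $(1,n)$ code (with the $q^{-1}\bigl(\sum_i Y_{e_i}-(q-1)Y_{e_b}\bigr)$ decode) for sufficiency.
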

\begin{IEEEproof}
Consider a $(d,dn)$ fractional linear network coding solution of the network. The local coding matrices are shown next to the edges. The matrices $Q_i$ for $1\leq i\leq n$ and $A_{ij}$ for $1\leq i\leq n, 1\leq j\leq q$ are of size $dn \times d$ and left multiplies the massage $a_i$. The matrices $C_{ij}$ for $1\leq i\leq q, 1\leq j\leq n$, $B_{ijk}$ for $1\leq i,k\leq q, i\neq k, 1\leq j\leq n$, and $D_{ij}$ for $1\leq i\leq q, 1\leq j\leq n$ left multiplies $b_{ij}$ and are of size $dn \times d$. The matrices $M_2,M_3,M_4$ and $K_i,R_i$ and $U_i$ for $1\leq i\leq q$ are of sizes $dn \times dn$. And the matrices of size $d \times dn$ are $E_j,G_{ij}$ and $V_j$ for $1\leq i\leq q$ and $1\leq j\leq n$. Also let $I$ be a $d\times d$ identity matrix. The following is the list of messages carried by some of the edges of the network.
\begin{IEEEeqnarray}{ll}
Y_{e_a} &= \sum_{i=1}^n Q_ia_i + \sum_{i=1}^q \sum_{j=1}^n C_{ij}b_{ij}\\
\IEEEeqnarraymulticol{2}{l}{\text{for } 1\leq i\leq q:}\IEEEnonumber\\
Y_{e_i} &= \sum_{j=1}^n A_{ji}a_j + \sum_{j=1,j\neq i}^q \sum_{k=1}^n B_{jki}b_{jk}\\
Y_{e_b} &= \sum_{i=1}^q \sum_{j=1}^n D_{ij}b_{ij}\\
Y_{e_a^\prime} &= M_2Y_{e_a} + M_3Y_{e_b} = \sum_{i=1}^n M_2Q_ia_i \IEEEnonumber\\&\hfill +\> \sum_{i=1}^q \sum_{j=1}^n (M_2C_{ij} + M_3D_{ij})b_{ij}\IEEEeqnarraynumspace\\
\IEEEeqnarraymulticol{2}{l}{\text{for } 1\leq i\leq q:}\IEEEnonumber\\
Y_{e_i^\prime} &= K_iY_{e_a} + R_iY_{e_i} = \sum_{j=1}^n (K_iQ_j + R_iA_{ji})a_j \IEEEnonumber \\&\hfill +\> \sum_{k=1}^n K_iC_{ik}b_{ik} + \sum_{j=1,j\neq i}^q \sum_{k=1}^n (K_iC_{jk} + R_iB_{jki})b_{jk}\IEEEeqnarraynumspace\\
Y_{e_b^\prime} &= \sum_{i=1}^q U_iY_{e_i} + M_4Y_{e_b} =  \sum_{i=1}^q \sum_{j=1}^n U_iA_{ji}a_{j} \IEEEnonumber\\&\hfill +\> \sum_{j=1}^q \sum_{k=1}^n \big( ( \sum_{i=1,i\neq j}^q U_iB_{jki}) + M_4D_{jk} \big)b_{jk} \IEEEeqnarraynumspace
\end{IEEEeqnarray}

Because of the demands of the terminals the following inequalities must be satisfied. Since any terminal $t_i\in T_{a_1}$ computes $a_i$, we have, for $1\leq i,j\leq n, j\neq i$:
\begin{IEEEeqnarray}{l}
E_iM_2Q_i = I\label{gq1}\\
E_iM_2Q_j = 0\label{gq2}
\end{IEEEeqnarray}
At $t_k\in T_{a}$ the component of $b_{ij}$ is zero for $1\leq i\leq q, 1\leq j,k\leq n$ we have:
\begin{equation}
E_k(M_2C_{ij} + M_3D_{ij}) = 0\label{gq3}
\end{equation}
Now consider the terminals in the set $T_{b_i}$ for $1\leq i\leq q$. Since at any terminal $t_j\in T_{b_i}$ for $1\leq j\leq n$ the component of $a_k$ for $1\leq k\leq n$ is zero, we have,
\begin{equation}
G_{ij}(K_iQ_k + R_iA_{ki}) = 0\label{gq4}
\end{equation}
Because $t_j\in T_{b_i}$ computes $b_{ij}$ for $1\leq i\leq q, 1\leq j,k\leq n, k\neq j$ we have:
\begin{IEEEeqnarray}{l}
G_{ij}(K_iC_{ij}) = I\label{gq5}\\
G_{ij}(K_iC_{ik}) = 0\label{gq6}
\end{IEEEeqnarray}
As the component of any $b_{kr}$ at $t_j\in T_{b_i}$ is zero if $k\neq i$, we have for $1\leq i,k\leq q, i\neq k, 1\leq j,r\leq n$ we have:
\begin{equation}
G_{ij}(K_iC_{kr} + R_iB_{kri}) = 0\label{gq7}
\end{equation}
We now consider the set $T_{a_2}$. Since the terminal $t_i\in T_{a_2}$ computes $a_i$ we have, for $1\leq i,j\leq n, j\neq i$
\begin{IEEEeqnarray}{l}
V_i(\sum_{k=1}^q U_kA_{ik}) = I\label{gq8}\\
V_i(\sum_{k=1}^q U_kA_{jk}) = 0\label{gq9}
\end{IEEEeqnarray}
The component of $b_{jk}$ is zero at $t_i\in T_{a_2}$ for $1\leq j\leq q, 1\leq i,k\leq n$, and hence we have,
\begin{equation}
V_i \big( ( \sum_{r=1,r\neq j}^q U_rB_{jkr}) + M_4D_{jk} \big) = 0\label{gq10}
\end{equation}
We now define the following matrices.
\begin{IEEEeqnarray}{l}
E = \begin{bmatrix} E_1 & E_2 & \cdots & E_n \end{bmatrix}^T\label{e}\\
M_2Q = \begin{bmatrix} M_2Q_1 & M_2Q_2 & \cdots & M_2Q_n \end{bmatrix}\label{m2q}\\
M_2C_{i} + M_3D_{i} = \begin{bmatrix} M_2C_{i1} + M_3D_{i1} \\ M_2C_{i2} + M_3D_{i2} \\ \vdots \\ M_2C_{in} + M_3D_{in} \end{bmatrix}^T\label{m2ci}\\
G_i = \begin{bmatrix} G_{i1} & G_{i2} & \cdots & G_{in} \end{bmatrix}^T\label{gi}\\
K_iQ + R_iA_{i} = \begin{bmatrix} K_iQ_1 + R_iA_{1i} \\ K_iQ_2 + R_iA_{2i} \\ \vdots \\ K_iQ_n + R_iA_{ni} \end{bmatrix}^T\label{kiq}\\
K_iC_{i} = \begin{bmatrix} K_iC_{i1} & K_iC_{i2} & \cdots & K_iC_{in} \end{bmatrix}\label{kici}\\
K_iC_{k} + R_iB_{ki} = \begin{bmatrix} K_iC_{k1} + R_iB_{k1i} \\ K_iC_{k2} + R_iB_{k2i} \\ \vdots \\ K_iC_{kn} + R_iB_{kni} \end{bmatrix}^T\label{kick}\\
V = \begin{bmatrix} V_1 & V_2 & \cdots & V_n \end{bmatrix}^T\label{v}\\
\sum_{k=1}^q U_kA_{k} = \begin{bmatrix} \sum_{k=1}^q U_kA_{1k} \\ \sum_{k=1}^q U_kA_{2k} \\ \vdots \\ \sum_{k=1}^q U_kA_{nk}\end{bmatrix}^T\label{sumukak}\\
 {(}\!\!\!\!\!\sum_{r=1,r\neq j}^q \!\! \!\! U_rB_{jr}) {+} M_4D_{j} = \! \begin{bmatrix} ( \sum_{r=1,r\neq j}^q U_rB_{j1r}) {+} M_4D_{j1} \\ ( \sum_{r=1,r\neq j}^q U_rB_{j2r}) {+} M_4D_{j2} \\ \vdots \\ ( \sum_{r=1,r\neq j}^q U_rB_{jnr}) {+} M_4D_{jn} \end{bmatrix}^T\!\!\!\!\! \IEEEeqnarraynumspace\label{urbjr}
\end{IEEEeqnarray}
Note that all of these newly defined matrices are square and are of size $dn \times dn$. For the rest of the paper let $I$ denote an identity matrix of size $dn\times dn$.
Applying Lemma~\ref{lem1} on equations (\ref{gq1}) and (\ref{gq2}) and using the newly defined matrices in equation (\ref{e}) and (\ref{m2q}) we get:
\begin{equation}
EM_2Q = I \label{em2q}
\end{equation}
From Corollary~\ref{coro1} and equations (\ref{gq3}), (\ref{e}) and (\ref{m2ci}) we get, for $1\leq i\leq q$:
\begin{equation}
E(M_2C_{i} + M_3D_{i}) = 0\label{em2ci}
\end{equation}
Similarly using Corollary~\ref{coro1} and equations (\ref{gq4}), (\ref{gi}) and (\ref{kiq}) we get, for $1\leq i\leq q$:
\begin{equation}
G_{i}(K_iQ + R_iA_{i}) = 0\label{gikiq}
\end{equation}
Using Lemma~\ref{lem1} and equations (\ref{gq5}), (\ref{gq6}), (\ref{gi}) and (\ref{kici}) we get, for $1\leq i\leq q$:
\begin{equation}
G_{i}(K_iC_{i}) = I\label{gikici}
\end{equation}
From Corollary~\ref{coro1} and equations (\ref{gq7}), (\ref{gi}) and (\ref{kick}) we have, for $1\leq i,k\leq n, i\neq k$:
\begin{equation}
G_{i}(K_iC_{k} + R_iB_{ki}) = 0\label{gikick}
\end{equation}
Employing Lemma~\ref{lem1} and equations (\ref{gq8}), (\ref{gq9}), (\ref{v}) and (\ref{sumukak}) we have:
\begin{equation}
V(\sum_{k=1}^q U_kA_{k}) = I\label{vsum}
\end{equation}
From Corollary~\ref{coro1} and equations (\ref{gq10}), (\ref{v}) and (\ref{urbjr}) we have, for $1\leq j\leq q$:
\begin{equation}
V\big( ( \sum_{r=1,r\neq j}^q U_rB_{jr}) + M_4D_{j} \big) = 0\label{vbigsum}
\end{equation}
The matrices $E, M_2$ and $Q$ are invertible from equation (\ref{em2q}). Matrix $G_i,K_i$ and $C_i$ for $1\leq i\leq q$ is invertible from equation (\ref{gikici}). And $V$ is invertible from equation (\ref{vsum}).
Since $E$ is invertible we have from equation (\ref{em2ci}):
\begin{equation}
M_2C_{i} + M_3D_{i} = 0 \label{1}
\end{equation}
 As both $M_2$ and $C_i$ are invertible matrices, their product is a full rank matrix, and hence from equation (\ref{1}), $M_3$ is an invertible matrix. This comes from the fact that a matrix of rank equal to a certain value cannot multiply with any other matrices and result in a matrix of rank equal to a greater value.
Since $G_i$ is invertible for $1\leq i\leq q$, we have from equation (\ref{gikiq}):
\begin{equation}
K_iQ + R_iA_{i} = 0 \label{2}
\end{equation}
Since both $K_i$ and $Q$ are invertible matrices, their product is a full rank matrix, and hence $R_i$ is an invertible matrix for $1\leq i\leq q$.
Also from equation (\ref{gikick}) we have, for $1\leq i,k\leq q, i\neq k$:
\begin{equation}
K_iC_{k} + R_iB_{ki} = 0 \label{3}
\end{equation}
And since $V$ is invertible we have from equation (\ref{vbigsum}), for $1\leq i\leq q$:
\begin{equation}
( \sum_{r=1,r\neq i}^p U_rB_{ir}) + M_4D_{i} = 0 \label{4}
\end{equation}
Substituting $D_i$ from equation (\ref{1}) in equation (\ref{4}) we get, for $1\leq i\leq q$:
\begin{IEEEeqnarray*}{l}
( \sum_{r=1,r\neq i}^q U_rB_{ir}) - M_4M_3^{-1}M_2C_{i} = 0\\
\text{Substituting $B_{ir}$ from equation (\ref{3})} we get:\\
-( \sum_{r=1,r\neq i}^q\!\! U_rR_r^{-1}K_rC_i) - M_4M_3^{-1}M_2C_{i} = 0\\%\quad \text{[eq. (\ref{3})]}\\
\text{Substituting $R_r^{-1}K_r$ from equation (\ref{2})} we get:\\
( \sum_{r=1,r\neq i}^q U_rA_rQ^{-1}C_i) - M_4M_3^{-1}M_2C_{i} = 0\\%\quad \text{[eq. (\ref{2})]}\\
\text{Substituting $Q^{-1}$ from equation (\ref{em2q}) we get:}\\
( \sum_{r=1,r\neq i}^q U_rA_rEM_2C_i) - M_4M_3^{-1}M_2C_{i} = 0\\%\quad \text{[eq. (\ref{em2q})]}\\
\text{or, }\big(( \sum_{r=1,r\neq i}^q U_rA_rE) - M_4M_3^{-1}\big)M_2C_{i} = 0\\
\text{Since $M_2$ and $C_i$ both are invertible}:\\
( \sum_{r=1,r\neq i}^q U_rA_rE) - M_4M_3^{-1} = 0\\
\text{Substituting $\sum_{r=1,r\neq i}^q U_rA_r$ from equation (\ref{vsum}) we get}:\\
(V^{-1} -U_iA_i)E - M_4M_3^{-1} = 0 \\%\qquad \text{[from eq. (\ref{vsum})]}\\
\text{or, }V^{-1} -U_iA_i = M_4M_3^{-1}E^{-1}\\
\text{or, }U_iA_i = V^{-1} - M_4M_3^{-1}E^{-1}\IEEEyesnumber\label{5}
\end{IEEEeqnarray*}

Now, substituting equation (\ref{5}) in equation (\ref{vsum}) we get
\begin{IEEEeqnarray*}{l}
V(\sum_{i=1}^q V^{-1} - M_4M_3^{-1}E^{-1}) = I\\
\sum_{i=1}^q V(V^{-1} - M_4M_3^{-1}E^{-1}) = I\\
\sum_{i=1}^q I - VM_4M_3^{-1}E^{-1} = I\\
(q-1)I = qIVM_4M_3^{-1}E^{-1}\IEEEyesnumber\label{6}
\end{IEEEeqnarray*}
In equation (\ref{6}), if $q=0$, then the equation becomes $-I = 0$. Hence $q\neq 0$ is a necessary condition for the network $\mathcal{N}_2$ to have a rate $\frac{1}{n}$ fractional linear network coding solution. Then from the fact that an element in a finite field is equal to zero if and only if the characteristic of the finite field divides that element, it can be concluded that $q\neq 0$ if and only if the characteristic of the finite field does not divides $q$.

We now show that $\mathcal{N}_2$ has a $(1,n)$ fractional linear network coding solution if the $q$ has an inverse in the finite field. Note that, as discussed above, $q$ has an inverse if and only if the characteristic of the finite field does not divides $q$. Let an $n$-length vector whose $i^{\text{th}}$ component is $a_i$ and all other components are zero be denoted by $\bar{a}_i$. Also let $n$-length vector whose $j^{\text{th}}$ component is $b_{ij}$ and all other components are zero be denoted by the notation $\bar{b}_{ij}$. Note that $a_j$ and $b_{ij}$ for $1\leq i\leq q, 1\leq j\leq n$ are the source processes. Then it can be seen that for proper local coding matrices the following information can be transmitted by the corresponding edges.
\begin{IEEEeqnarray*}{l}
Y_{e_a} = \sum_{j=1}^n \bar{a}_j + \sum_{i=1}^q \sum_{j=1}^n \bar{b}_{ij}\\
\text{for } 1\leq i\leq q: \quad Y_{e_i} = \sum_{j=1}^n \bar{a}_j + \sum_{k=1,k\neq i}^q \sum_{j=1}^n \bar{b}_{kj}\\
Y_{e_b} = \sum_{i=1}^q \sum_{j=1}^n \bar{b}_{ij}\\
Y_{e_a^\prime} = Y_{e_a} - Y_{e_b} = \sum_{j=1}^n \bar{a}_j\\
\text{for } 1\leq i\leq q: \quad Y_{e_i^\prime} = Y_{e_a} - Y_{e_i} = \sum_{j=1}^n \bar{b}_{ij}\\
Y_{e_b^\prime} = q^{-1}\{\sum_{i=1}^q Y_{e_i} - (q-1)Y_{e_b}\} = q^{-1}\{q(\sum_{j=1}^n \bar{a}_j) \\+\> (q-1)(\sum_{i=1}^q \sum_{j=1}^n \bar{b}_{ij}) - (q-1)(\sum_{i=1}^q \sum_{j=1}^n \bar{b}_{ij})\} = \sum_{j=1}^n \bar{a}_j
\end{IEEEeqnarray*}
Let $\check{u}(j)$ be a unit row vector of length $n$ which has $j^{\text{th}}$ component equal to one and all other component equal to zero. Then from the dot product of $\check{u}(j)$ and $\sum_{j=1}^n \bar{a}_j$, message $a_j$ can be retrieved. Similarly from the dot product of $\check{u}(j)$ and $\sum_{j=1}^n \bar{b}_{ij}$, $b_{ij}$ can be determined.
\end{IEEEproof}
\begin{theorem}\label{thm2}
For any non-zero positive rational number $\frac{k}{n}$ and for any finite set of prime numbers $\{p_1,p_2,\ldots,p_l\}$ there exists a network which has a rate $\frac{k}{n}$ fractional linear network coding solution if and only if the characteristic of the finite field does not belong to the given set of primes.
\end{theorem}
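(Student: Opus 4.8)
The plan is to mirror the proof of Theorem~\ref{thm1} almost verbatim, replacing the finite-set gadget $\mathcal{N}_1$ by the co-finite gadget $\mathcal{N}_2$ and invoking Lemma~\ref{lem4} wherever that proof invoked Lemma~\ref{lem3}. Set $q = p_1 \times p_2 \times \cdots \times p_l$ and take $k$ copies $\mathcal{N}_{2i}$, $1 \le i \le k$, of the network of Fig.~\ref{nofano1/n} built for this $q$. Identify all $k$ copies of each source (respectively terminal) into a single source (respectively terminal), obtaining a network $\mathcal{N}_2^\prime$. I claim that $\mathcal{N}_2^\prime$ has a rate $\frac{k}{n}$ fractional linear network coding solution if and only if the characteristic of the field does not belong to $\{p_1,\ldots,p_l\}$.

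First I would state and prove the exact analogue of Lemma~\ref{lem2}: if $\mathcal{N}_2^\prime$ has a $(dk,dn)$ solution, then $\mathcal{N}_2$ has a $(dk,dkn)$ solution. The justification is identical to that of Lemma~\ref{lem2} — $\mathcal{N}_2^\prime$ literally contains $k$ disjoint copies of $\mathcal{N}_2$, so whatever can be sent across $\mathcal{N}_2^\prime$ in $x$ uses can be sent across a single $\mathcal{N}_2$ in $kx$ uses.

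For the only-if direction I would argue by contradiction: suppose $\mathcal{N}_2^\prime$ admits a rate $\frac{k}{n}$ solution over a field whose characteristic \emph{does} lie in $\{p_1,\ldots,p_l\}$. By the analogue of Lemma~\ref{lem2} this produces a $(dk,dkn)$ — that is, rate $\frac{1}{n}$ — solution of $\mathcal{N}_2$ over the same field. But Lemma~\ref{lem4} says $\mathcal{N}_2$ has such a solution only if the characteristic does not divide $q$, and because $q = p_1\cdots p_l$ the characteristic divides $q$ precisely when it equals one of the $p_i$, i.e.\ precisely when it belongs to the set. This contradiction settles the only-if part. The if-direction is the routine construction already used at the end of Theorem~\ref{thm1}: when the characteristic lies outside the set it does not divide $q$, so Lemma~\ref{lem4} furnishes a $(1,n)$ solution of every copy $\mathcal{N}_{2i}$; keeping the same local coding matrices in all copies and routing the $i^{\text{th}}$ source component through $\mathcal{N}_{2i}$ gives a $(k,n)$ solution, and hence a rate $\frac{k}{n}$ solution, of $\mathcal{N}_2^\prime$.

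I do not expect any serious obstacle, since all the analytic work is hidden inside Lemma~\ref{lem4}, which is already proved. The one point requiring attention is the polarity of the biconditional: for $\mathcal{N}_2$ solvability corresponds to the characteristic \emph{avoiding} $\{p_1,\ldots,p_l\}$, so in the only-if step the contradiction must be triggered by assuming the characteristic \emph{lies in} the set — the opposite of the assumption made for $\mathcal{N}_1$ in Theorem~\ref{thm1}. Getting this negation right, and nothing more, is what distinguishes this proof from the previous one.
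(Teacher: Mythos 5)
Your proposal is correct and takes essentially the same approach as the paper: set $q = p_1\cdots p_l$, merge $k$ copies of $\mathcal{N}_2$ into $\mathcal{N}_2^\prime$, use the analogue of Lemma~\ref{lem2} together with Lemma~\ref{lem4} to get the contradiction in the only-if direction, and route the $i^{\text{th}}$ source component through the $i^{\text{th}}$ copy for the if direction. (The paper's text says ``$n$ copies'' of $\mathcal{N}_2$, an apparent typo; your count of $k$ copies is the one that makes the rate bookkeeping work, exactly as in Theorem~\ref{thm1}.)
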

\begin{IEEEproof}
Let $q$ be equal to $p_1.p_2.\ldots .p_l$ in $\mathcal{N}_2$. Let us construct $\mathcal{N}_2^\prime$ by joining $n$ copies of $\mathcal{N}_2$ at the corresponding sources and the terminals, in a similar way $\mathcal{N}_1^\prime$ was constructed from $\mathcal{N}_1$. It can be also seen that Lemma~\ref{lem2} holds true when $\mathcal{N}_1$ and $\mathcal{N}_1^\prime$ are replaced by $\mathcal{N}_2$ and $\mathcal{N}_2^\prime$ respectively. So if $\mathcal{N}_2^\prime$ has a $(k,n)$ fractional linear network coding solution then $\mathcal{N}_2$ has a $(k,kn)$ fractional linear network coding solution.

Now say $\mathcal{N}_2^\prime$ has a rate $\frac{k}{n}$ fractional linear network coding solution even if the characteristic of the finite belongs to the set $\{p_1,p_2,\ldots,p_l\}$. Since $q = p_1.p_2.\ldots .p_l$, note that $q=0$ over such a field. Then, $\mathcal{N}_2$ has a rate $\frac{k}{kn} = \frac{1}{n}$ fractional linear network coding solution over a finite field in which $q=0$. However, this is in contradiction with Lemma~\ref{lem4}.

If however, the characteristic does not belong to the given set of primes, then, since there are $n$ copies of $\mathcal{N}_2$ in $\mathcal{N}_2^\prime$, and each copy has a $(1,n)$ fractional linear network coding solution, a $(k,n)$ fractional linear network coding solution can easily be constructed for $\mathcal{N}_2^\prime$.
\end{IEEEproof}

\section{A multiple-unicast network having a rate $\frac{k}{n}$ fractional linear network coding solution iff the characteristic belongs to a given finite/co-finite set of primes}\label{sec4}
\begin{figure}
\centering
\includegraphics[width=0.48\textwidth]{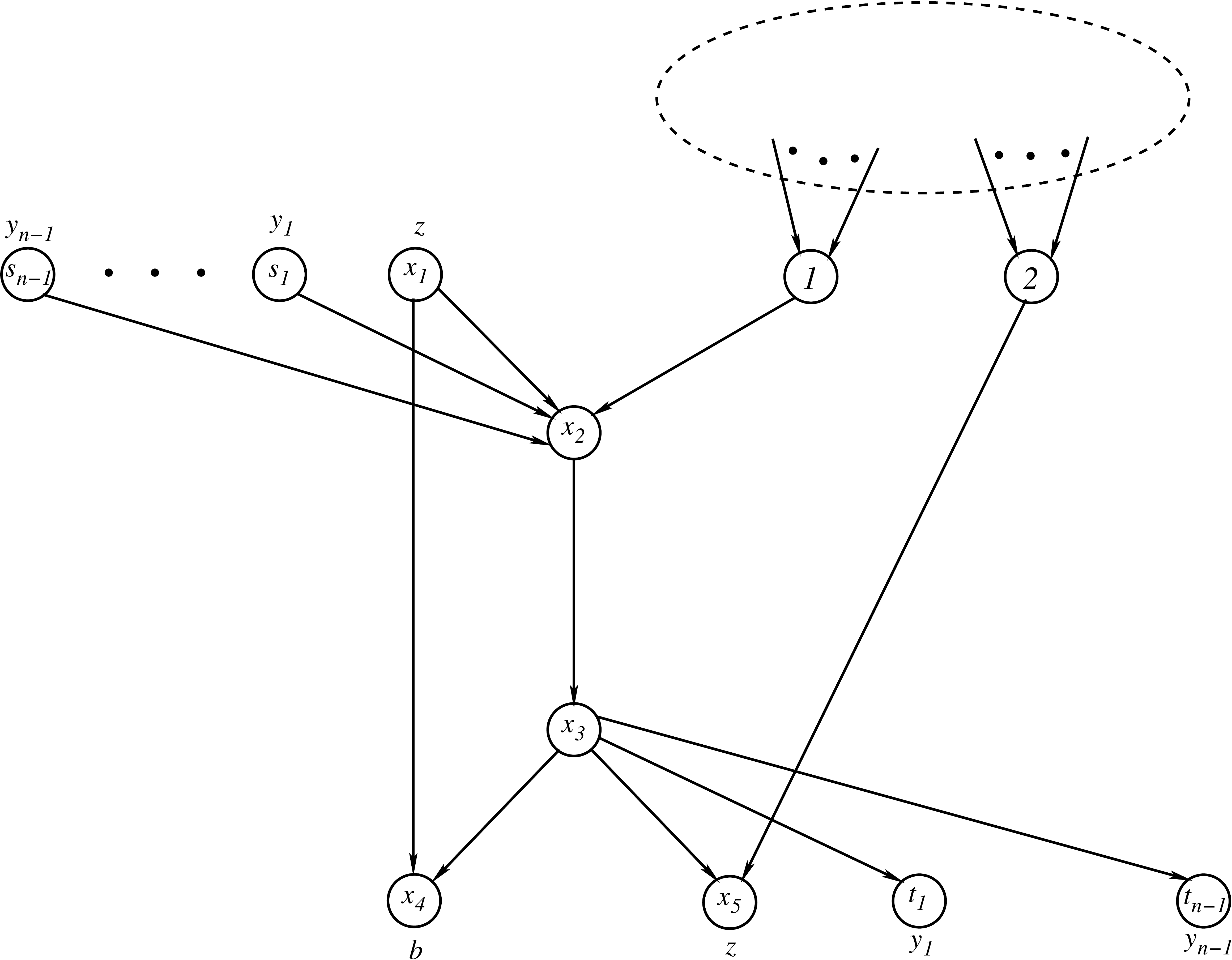}
\caption{Gadget which attaches to the two terminals of any arbitrary network indicated by the dotted lines. Nodes $s_1,\ldots ,s_{n-1}$ are sources and source $s_i$ generates the messages $y_i$ for $1\leq i\leq (n-1)$. The nodes $t_1, \ldots ,t_{n-1}$ are terminals and $t_i$ demands $y_{i}$ for $1\leq i\leq (n-1)$. The nodes $n_1$ and $n_2$ were terminals in the original network (network indicated by the dotted line) and both of these nodes demanded the message $b$. The rest of the naming convention has been kept the same as it was in \cite{non}.}
\label{nw}
\end{figure}
In this section we show that for any non-zero positive rational number $\frac{k}{n}$ and for any finite/co-finite set of primes, there exits a multiple-unicast network which has a rate $\frac{k}{n}$ fractional linear network coding solution if and only if the characteristic of the finite field belongs to the given set. To prove this result, we first show that for each of the networks $\mathcal{N}_1$ and $\mathcal{N}_2$ presented in Section~\ref{sec3}, there exists a multiple-unicast network which has a $(1,n)$ fractional linear network coding solution if and only if the the corresponding network $\mathcal{N}_1$ or $\mathcal{N}_2$ has a $(1,n)$ fractional linear network coding solution.

In a multiple-unicast network, by definition, each source process is generated at only one source node and is demanded by only one terminal. Additionally, each source node generates only one source process, and each terminal demands only one source process. In both the networks $\mathcal{N}_1$ and $\mathcal{N}_2$ there exists no source processes which is generated by more than one source node, or no source node generates more than one source process. Moreover, there does not exist any terminal which demands more than one source process. However, there exists more than one terminal which demands the same source process. This is fixed in the following way.

In \cite{non} it was shown that for any network there exists a solvably equivalent multiple-unicast network. To resolve the case of more than one terminals demanding the same source message, the authors considered two such terminals at a time and added a gadget to the two terminals. The same procedure is followed here, only the gadget has been modified. This modified gadget is shown in Fig.~\ref{nw}. It is assumed that the nodes $n_1$ and $n_2$ both demanded the same message $b$ in the original network (network before attaching the gadget). After adding the gadget, the modified network has $n$ more source nodes $x_1,s_1,\ldots ,s_{n-1}$, and $n+1$ new terminal nodes $x_4,x_5,t_1,\ldots ,t_{n-1}$. Nodes $n_1$ and $n_2$ becomes intermediate nodes in the modified construction. This process has to be repeated iteratively for every two terminals in the original network that demand the same source process. In the same way as shown in \textit{Theorem II.1} of \cite{non}, it can be shown that after the completion of this process, the resulting network has a $(1,n)$ fractional linear network coding solution if and only if the original network has a $(1,n)$ fractional linear network coding solution.

Hence, as shown above, corresponding to each of the networks $\mathcal{N}_1$ and $\mathcal{N}_2$, there exist multiple-unicast networks $\mathcal{N}_1^m$ and $\mathcal{N}_2^m$ which have a $(1,n)$ fractional linear network coding solution if and only if $\mathcal{N}_1$ and $\mathcal{N}_2$ have a $(1,n)$ fractional linear network coding solution respectively. Now by connecting $k$ copies of $\mathcal{N}_1^m$ and $\mathcal{N}_2^m$ in the same way as $\mathcal{N}_1^\prime$ and $\mathcal{N}_2^\prime$ was constructed from $\mathcal{N}_1$ and $\mathcal{N}_2$ respectively, the following theorem can be proved in a similar way to Theorem~\ref{thm1} and Theorem~\ref{thm2}.
\begin{theorem}
For any non-zero positive rational number $\frac{k}{n}$ and for any finite/co-finite set of prime numbers there exists a multiple-unicast network which has a rate $\frac{k}{n}$ fractional linear network coding solution if and only if the characteristic of the finite field belongs to the given set of primes.
\end{theorem}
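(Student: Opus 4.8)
The plan is to assemble the final theorem from the three ingredients already established in the excerpt, so the work is almost entirely organizational rather than computational. First I would invoke the result imported from \cite{non}: for any network with more than one terminal demanding the same source message, attaching the modified gadget of Fig.~\ref{nw} iteratively to each pair of such terminals yields a multiple-unicast network whose solvability is equivalent to that of the original, level by level in the fractional sense. Concretely, I would state and rely on the claim that the gadget preserves $(1,n)$ fractional linear solvability in both directions, exactly as \textit{Theorem II.1} of \cite{non} does for scalar solvability; this gives multiple-unicast networks $\mathcal{N}_1^m$ and $\mathcal{N}_2^m$ satisfying: $\mathcal{N}_i^m$ has a $(1,n)$ fractional linear solution if and only if $\mathcal{N}_i$ does, for $i\in\{1,2\}$.

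Next I would combine this equivalence with Lemma~\ref{lem3} and Lemma~\ref{lem4}. Since $\mathcal{N}_1$ has a rate $\tfrac{1}{n}$ solution iff the characteristic divides $q$, the gadget-equivalence transfers this verbatim to $\mathcal{N}_1^m$; symmetrically $\mathcal{N}_2^m$ has a rate $\tfrac{1}{n}$ solution iff the characteristic does not divide $q$. The final step mirrors Theorem~\ref{thm1} and Theorem~\ref{thm2}: I would form $\mathcal{N}_1^{m\prime}$ (resp. $\mathcal{N}_2^{m\prime}$) by taking $k$ disjoint copies of $\mathcal{N}_1^m$ (resp. $\mathcal{N}_2^m$) with $q=p_1 p_2\cdots p_l$ and merging corresponding sources and terminals, then reprove the analogue of Lemma~\ref{lem2} to get that a $(dk,dn)$ solution of the merged network forces a $(dk,dkn)$ solution of a single copy. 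The only-if direction then follows by contradiction using Lemma~\ref{lem3}/\ref{lem4} and the fact that $q=0$ over $\mathbb{F}_p$ iff $p\in\{p_1,\ldots,p_l\}$; the if direction follows by running the $i^{\text{th}}$ message-coordinate through the $i^{\text{th}}$ copy.

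The subtle point I would guard against is that merging sources and terminals across the $k$ copies must preserve the \emph{multiple-unicast} property: each merged source must still generate a single process and each merged terminal demand a single process. Since the copies are disjoint and carry disjoint message labels, I would note that the merge identifies structurally corresponding nodes but keeps the message alphabets distinct per copy, so no merged source generates two processes and no merged terminal demands two processes; hence $\mathcal{N}_i^{m\prime}$ is genuinely multiple-unicast. I would also verify that Lemma~\ref{lem2} carries over unchanged, since its proof only uses that the primed network contains $k$ copies of the base network and time-sharing across uses — facts unaffected by the gadget attachment.

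The main obstacle is ensuring the gadget-equivalence holds at the level of rate $\tfrac{1}{n}$ fractional (rather than scalar) solvability, because \cite{non} is phrased for solvability in the scalar sense. I expect this to be genuine bookkeeping: the gadget's internal coding constraints are linear identities among the local coding matrices, so replacing each scalar symbol by a length-$n$ vector and each scalar coefficient by an appropriate block matrix preserves every equation, and the demanded decoding relations translate into the matrix identities of the form used in Lemma~\ref{lem1}. Thus I would argue that the proof of \textit{Theorem II.1} in \cite{non} goes through verbatim with symbols reinterpreted as elements of $\mathbb{F}_q^n$ and coefficients as matrices over $\mathbb{F}_q$, which is exactly why the statement is asserted to follow ``in the same way.''
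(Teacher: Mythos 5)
Your proposal follows essentially the same route as the paper: attach the modified gadget of \cite{non} iteratively to each pair of terminals sharing a demand, obtaining multiple-unicast networks $\mathcal{N}_1^m$ and $\mathcal{N}_2^m$ whose $(1,n)$ fractional linear solvability is equivalent to that of $\mathcal{N}_1$ and $\mathcal{N}_2$, then merge $k$ copies of each and rerun the arguments of Theorem~\ref{thm1} and Theorem~\ref{thm2} via the analogue of Lemma~\ref{lem2}. Your additional checks --- that the merge preserves the multiple-unicast property and that the equivalence of \textit{Theorem II.1} of \cite{non} lifts from scalar to $(1,n)$ fractional solvability --- make explicit details the paper leaves implicit, but they do not constitute a different approach.
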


\section{Conclusion}\label{sec5}
In this paper we have shown that for any non-zero positive rational number $\frac{k}{n}$ and any finite/co-finite set of prime numbers there exists a multiple unicast network which has a rate $\frac{k}{n}$ fractional linear network coding solution if and only if the characteristic of the finite field belongs to the given set. To prove the existence, we have explicitly presented networks having desired properties. The generalized Fano and generalized non-Fano networks presented in \cite{gf} are special cases of the networks presented in this paper.

\end{document}